\documentclass[11pt]{article}

\newif\ifblind
\blindfalse

\usepackage{algorithm}
\usepackage{algpseudocode}
\usepackage{graphics,graphicx,fullpage,natbib, multirow}
\usepackage{amsmath,amsfonts,amssymb,verbatim,epsfig}
\usepackage[colorlinks=true,linkcolor=black!50!blue,citecolor=black!20!blue,urlcolor=black!10!blue]{hyperref}
\usepackage[dvipsnames,usenames]{color}
\usepackage{graphicx}
\usepackage{subcaption}
\usepackage[normalem]{ulem}
\usepackage{enumitem}
\usepackage{soul}
\usepackage{booktabs}
\usepackage{setspace} 
\usepackage{appendix}
\usepackage{xcolor}
\usepackage{bm}

\usepackage{titlesec}

\newtheorem{lemma}{Lemma}
\newtheorem{assumption}{Assumption}

\newtheorem{proposition}{Proposition}

\newtheorem{defin}{Definition}

\newenvironment{proof}{\noindent{\bf Proof}}{$\diamond$}

\newcommand{\bI}{ {\bf I} }
\newcommand{\bX}{ {\bf X} }
\newcommand{\bY}{ {\bf Y} }

\newcommand{\bU}{ {\bf U} }

\newcommand{\bM}{ {\bf M} }
\newcommand{\bV}{ {\bf V} }
\newcommand{\bP}{ {\bf P} }

\newcommand{\bA}{ {\bf A} }

\newcommand{\bz}{ {\bf z} }

\newcommand{\by}{ {\bf y} }
\newcommand{\bh}{ {\bf h} }

\newcommand{\bD}{ {\bf D} }
\newcommand{\bE}{ {\bf E} }
\newcommand{\bS}{ {\bf S} }

\DeclareMathOperator*{\argmin}{arg\,min}

\newcommand{\bzero}{ {\bf 0} }
\newcommand{\ba}{ {\bf a} }

\newcommand{\be}{ {\bf e} }

\newcommand{\bB}{ {\bf B} }

\newcommand{\bu}{ {\bf u} }
\newcommand{\bv}{ {\bf v} }
\newcommand{\bw}{ {\bf w} }

\newcommand{\bbeta}{\text{\boldmath $ \beta $}}

\newcommand{\XV}{\bX_U(\widetilde{\bV})}
\newcommand{\XVstar}{\bX_U(\bV'_*)}
\newcommand{\XA}{\widehat{\bX}_{\mathcal{A}^*_U}}
\newcommand{\XAi}{\widehat{\bX}_{\mathcal{A}^*_i}}
\newcommand{\XAc}{\widehat{\bX}_{\mathcal{A}_U^{*,\comp}}}
\newcommand{\XAstar}{\bX^*_{\mathcal{A}^*_U}}
\newcommand{\XAistar}{\bX^*_{\mathcal{A}^*_i}}
\newcommand{\XAcstar}{\bX^*_{\mathcal{A}_U^{*,\comp}}}
\newcommand{\hbX}{\widehat{\bX}}
\newcommand{\tV}{\widetilde{\bV}}

\newcommand{\tp}{\top}
\newcommand{\vecz}{\operatorname{vec}}
\newcommand{\norm}[1]{\| #1 \|}

\newcommand{\comp}{\mathsf{c}}

\newcommand{\vecop}[1]{\operatorname{vec}\left(#1\right)}

\title{\textbf{Group-regularized matrix factorization for fast and reliable module discovery in pan-omics pan-cancer studies}}

\date{}
\ifblind
\author{\empty}
\else
{\singlespacing
\author{Jun Young Park$^{1,2}$\thanks{Co-corresponding authors: \url{junjy.park@utoronto.ca}, \url{pwmacdonald@uwaterloo.ca}.}~ and Peter W. MacDonald$^{3*}$}
    \date{
    {\small $^1$ {Department of Statistical Sciences, University of Toronto, Toronto, ON, Canada}} \\
    {\small $^2$ {Department of Psychology, University of Toronto, Toronto, ON, Canada}} \\
    {\small $^3$ {Department of Statistics and Actuarial Science, University
    of Waterloo, Waterloo, ON, Canada}}
}
}
\fi

\begin{document}

\maketitle

{\singlespacing
\begin{abstract} \noindent 
In pan-omics pan-cancer studies, it is critical to identify latent sources of variation that are shared across particular subsets. This task often requires bidimensionally linked data matrices to be decomposed into a sum of block-sparse, low-rank modules. Existing approaches often rely on pre-specified module numbers, ranks, or post-hoc thresholding and can be sensitive to model specification when the underlying sharing structure is complex. To address these issues, we propose GL-BIDIFAC+, a group-regularized matrix factorization framework for discovering partially shared modules. It requires only an upper bound on the latent dimension and encourages module selection through group regularization with theoretically-motivated tuning parameter selection and local support recovery analysis, providing both scalability and principled guidance for module discovery. It also admits a probabilistic interpretation that enables model-based imputation of missing data. Simulation studies demonstrate accurate module recovery and favorable computational performance relative to existing approaches. We further apply GL-BIDIFAC+ to analyze the Cancer Genome Atlas data, where well-established molecular structure provides interpretable biological references. Our analysis distinguishes broad pan-cancer variation, cancer-specific subtype structure, and variation shared across cancers with related tissue origins or histologic features.

\vspace{2mm}

\noindent \textit{Keywords}: bidimensional linked matrix factorization; cancer genomics; dimension reduction; group lasso; matrix completion.
  
\end{abstract}
}

\newpage

\onehalfspacing
\section{Introduction}

Modern biomedical studies increasingly collect high-dimensional data that are shared across multiple cohorts or across multiple data platforms, which has facilitated the development of data integration methods. A representative example arises in pan-omics pan-cancer studies in the Cancer Genome Atlas (TCGA), where omics features are measured across multiple cancers and tissues \citep{weinstein2013cancer}. In such settings, an important scientific goal is to utilize these data structures to identify latent structure that is shared not only globally across all data sources, but also partially across specific subsets of cancers and/or omics. 

This has motivated a broad literature on integrative factorization for multi-omic data, where low-rank structure is used to characterize shared and source-specific variation in a principled and interpretable manner. A comprehensive review  is provided by \citet{li2025integrative}. A major line of work in this area is built on  methods related to joint and individual variation explained (JIVE), which decompose linked data into low-rank components representing common and individual structure \citep{lock2013joint}. These ideas have been extended in several directions, including the incorporation of auxiliary structure, alternative geometric formulations, and more interpretable decompositions of joint variation \citep{li2017incorporating, feng2018angle, gaynanova2019structural}. More broadly, related structured decomposition ideas have also proved useful in batch effect correction  \citep{zhang2023relief, hastings2024batch}.  However, these methods primarily consider the integration of matrices that are linked in rows (i.e., multiple cohorts, single omics) or columns (multi-omics, a single cohort), and are less well-suited to bidimensionally linked matrices.

Extending the idea of JIVE, \citet{park2020integrative} and \citet{lock2022bidimensional} proposed BIDIFAC/BIDIFAC+, which model bidimensionally linked data as a sum of low-rank \textit{modules}, where each module is supported only on a particular subset of omics types and sample groups. Notably, the analysis of the pan-omics pan-cancer TCGA data was conducted in \citet{lock2022bidimensional} with 4 omics types  and 29 cancer types. In their analysis, the maximum number of modules was fixed at 50, and the analysis identified several modules with biologically meaningful interpretations, including mRNA and methylation loadings specific to sex chromosomes, as well as score patterns that distinguished subtypes of breast cancer (BRCA) across all omics profiles and low-grade glioma (LGG) through methylation. 

The motivation for reanalyzing the TCGA data in this paper is two-fold. First, we observe suboptimal module recovery of BIDIFAC+, as we report in Section \ref{sec:sim1}, in \textit{module-sparse} scenarios, where the number of active modules is smaller than the number of possible modules. The number of candidate modules grows combinatorially with the numbers of row and column groups, making exhaustive estimation infeasible. Underspecification of the number of modules can arise in pan-omics pan-cancer data, where the number of shared and distinct patterns of biological variation is expected to be large. Second, possibly due to nuclear norm penalization based on soft thresholding of singular values used in BIDIFAC+, the estimated module ranks could become too large to be interpretable. For example, cholangiocarcinoma (CHOL) had 30 samples, but the first module reported in \citet{lock2022bidimensional}, which has support over all cancer types and omics types, already had rank higher than 30. This suggests that the modules estimated by \citet{lock2022bidimensional} could be subject to possible concerns about overlapping subspaces of modules depending on the choices of tuning parameters.

\begin{figure}[t]
    \centering
    \includegraphics[width=0.9\linewidth]{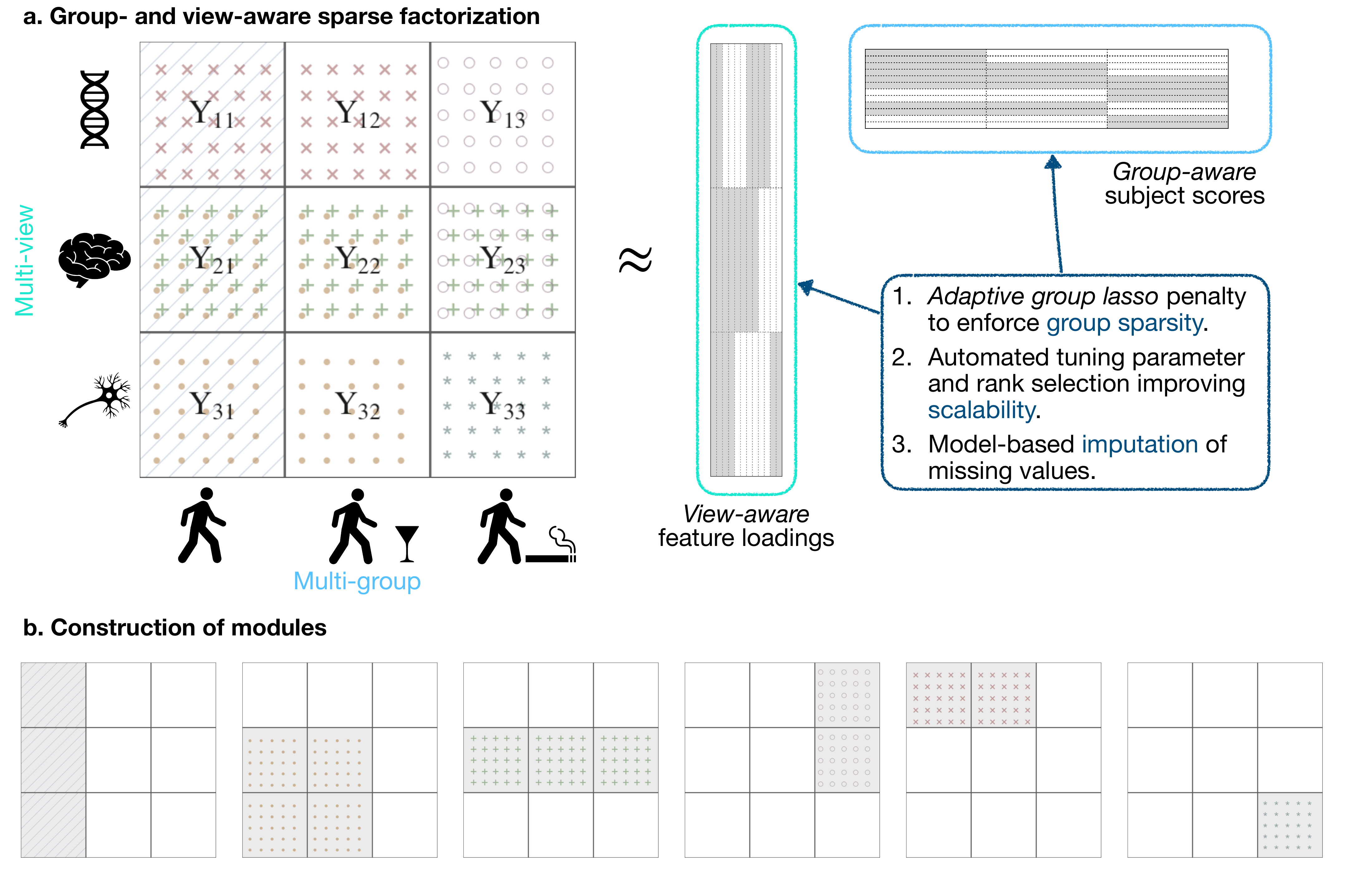}
    \caption{Schematic overview of GL-BIDIFAC+. (a) Multi-view and multi-group data are represented as a block matrix and approximated by group-aware feature loadings and subject scores. group lasso penalties imposed on the row- and column-grouped factors allow latent factors to be active only on selected views and sample groups. (b) Columns of the estimated factor matrices with the same active row- and column-group patterns are combined to form low-rank \textit{modules}, yielding biological insights on  shared or  group-specific variation.}
    \label{fig:simdesign}
\end{figure}

To address these challenges, we propose GL-BIDIFAC+ (Group Lasso-based BIDIFAC+), a group-regularized matrix factorization approach for bidimensionally linked matrices. Our strategy is to directly penalize feature loadings and subject scores with the group lasso penalties, where tuning parameters are chosen to ensure support recovery while avoiding an exhaustive grid search. A visualization of the proposed approach is provided in Figure \ref{fig:simdesign}. Motivated by the TCGA pan-omics pan-cancer analysis, we focus on the following questions:
\begin{enumerate}
    \item[Q1.] Can we discover partially shared modules reliably without enumerating the combinatorial module space? 
    \item[Q2.] Does the resulting decomposition recover known biological properties, e.g., cancer-specific subtype structure and tissue- or histology-related patterns?
    \item[Q3.] Can the analysis be scaled to pan-omics pan-cancer data while retaining the imputation capability of linked low-rank matrix factorization?
\end{enumerate}

The rest of the paper is organized as follows. The setup for decomposing bidimensionally linked matrices into a sum of modules is provided in Section \ref{sec:setup}, and several existing methods are discussed in Section \ref{sec:bidifacplus}. We present our method in Section \ref{sec:method} and investigate its theoretical properties in Section \ref{sec:theory}. Empirical validation of the proposed method is shown in Section \ref{sec:sim}, and we analyze TCGA genomic data in Section \ref{sec:data}. Discussion is provided in Section \ref{sec:discussion}.

\paragraph{Notation}
For $\ba\in\mathbb{R}^m$, we let $||\ba||_2$ be the $\ell_2$ norm of a vector $\ba$. For a matrix $\bA\in \mathbb{R}^{m\times n}$, we let $||\bA||_F$, $||\bA||_*$ be the Frobenius norm and nuclear norm of $\bA$, respectively. We define $\sigma_l(\bA)$ to be the $l$th singular value of $\bA$. For subsets $\mathcal I_1 \subseteq [m] = \{1,\ldots,m\}$ and 
$\mathcal I_2 \subseteq [n] = \{1,\ldots,n\}$, let 
$\bA_{\mathcal I_1,:}$ and $\bA_{:,\mathcal I_2}$ denote the submatrices of  $\bA$ obtained by restricting its rows to $\mathcal I_1$ and its columns to  $\mathcal I_2$, respectively. We also write $\bA_{\mathcal I_1,l}$ for the  restriction of the $l$th column of $\bA$ to rows in $\mathcal I_1$, and  $\bA_{l,\mathcal I_2}$ for the restriction of the $l$th row of $\bA$ to columns 
in $\mathcal I_2$. Lastly, $\bzero$ refers to a zero vector or matrix of appropriate size depending on the context. For matrices, $\bA\circ\bB$ refers to the Hadamard product of $\bA$ and $\bB$ of the same dimensions.

\section{Setup}\label{sec:setup}

Consider a set of $pq$ matrices $\lbrace \bY_{ij}\in\mathbb{R}^{m_i\times n_j}~:~i\in[p], j\in[q]\rbrace$ that are arranged into a block matrix $\bY\in\mathbb{R}^{m\times n}$ where $m=\sum_{i=1}^p m_i$ and $n=\sum_{j=1}^q n_j$:
\begin{align*}
    \bY=\left[\begin{array}{cccc}
      \bY_{11}   & \bY_{12} & \dots & \bY_{1q} \\
      \bY_{21}   &  \bY_{22} & \dots & \bY_{2q} \\
      \vdots & \vdots & \ddots & \vdots\\
      \bY_{p1}   & \bY_{p2} &\dots & \bY_{pq}
    \end{array}\right]\in\mathbb{R}^{m\times n}.
\end{align*}
Let $\{\mathcal{G}_1,\ldots,\mathcal{G}_{p}\}$ be a partition of $[m]$ and
$\{\mathcal{H}_1,\ldots,\mathcal{H}_{q}\}$ be a partition of $[n]$. Following  \citet{lock2022bidimensional}, our primary goal is to decompose $\bY$ as follows:
\begin{align} \label{eq:model}
    \bY=\bS+\bE, \quad \bS=\sum_{k=1}^K \bS^{(k)},
\end{align}
where $\bS^{(k)}$ is the $k$th \textit{module} and is assumed to have rank $r_k$. The entries of $\bE$ are independent, subgaussian white noise with variance at most $\sigma^2$. 
We assume that  $\bS^{(k)}$ is supported only on entries incident to the row blocks  $\mathcal{A}_k \subseteq [p]$ and the column blocks $\mathcal{B}_k \subseteq [q]$.
Each module has a factorization $\bS^{(k)} =\bU^{(k)}\bV^{(k)\top}$ (with $\bU^{(k)}\in\mathbb{R}^{m\times r_k}$ and $\bV^{(k)}\in\mathbb{R}^{n\times r_k}$), and the factors are identified up to an unknown linear transformation, as $\bU^{(k)}\bV^{(k)\top} =\bU^{(k)}\bA \bA^{-1} \bV^{(k)\top} = \bU^{(k)}\bA (\bV^{(k)}\bA^{-\top})^{\top}$
for any invertible $\bA \in \mathbb{R}^{r_k \times r_k}$.
However, since $\bA$ operates on the columns of $\bU^{(k)}$ and $\bV^{(k)}$, all factorizations of $\bS^{(k)}$ share the same pattern of row sparsity, and thus we may unambiguously write
\begin{align*}
\mathcal{A}_k = \lbrace i\in[p] :  ||\bU^{(k)}_{\mathcal{G}_i,:}||_F>0 \rbrace,\qquad
\mathcal{B}_k = \lbrace j\in[q] :  ||\bV^{(k)}_{\mathcal{H}_j,:}||_F>0 \rbrace.    
\end{align*}
We note that the decomposition in \eqref{eq:model} is not fully identifiable without additional strong conditions, e.g., mutual orthogonality of the support-restricted block matrices.
To make the module ranks and sparsity structure identifiable, we impose an additional rank additivity assumption on the decomposition.
A module is characterized by non-empty sets $\mathcal{A} \subseteq [p]$ and $\mathcal{B} \subseteq [q]$; then a module decomposition of $\bS$ is defined by the non-negative integer ranks corresponding to each subset pair. 
Further define $\bS_{\mathcal{A}\mathcal{B}} = \bS[ \cup_{i \in \mathcal{A}} \mathcal{G}_i , \cup_{j \in \mathcal{B}} \mathcal{H}_j ]$.
To help uniquely identify the ranks of a given module decomposition, we require what amounts to a linear independence and full-rank condition on each submatrix of $\bS$.

\begin{proposition} \label{prop:ident}
    Suppose $\bS$ has a module decomposition with ranks $r_k = r_*(\mathcal A_k , \mathcal B_k)$ satisfying 
    \begin{equation} \label{ident_cond}
        \operatorname{rank}(\bS_{\mathcal{A}\mathcal{B}}) = \sum_{\mathcal{A}' : \mathcal{A}' \cap \mathcal{A} \neq \emptyset} \sum_{\mathcal{B}' : \mathcal{B}' \cap \mathcal{B} \neq \emptyset} r_*(\mathcal{A}',\mathcal{B}').
    \end{equation}
    for any $\mathcal{A} \subseteq [p]$ and $\mathcal{B} \subseteq [q]$. Then $\{r_k\}$ is the unique set of module ranks which decompose $\bS$ and satisfy \eqref{ident_cond}.
\end{proposition}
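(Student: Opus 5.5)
Write $f(\mathcal A,\mathcal B)=\operatorname{rank}(\bS_{\mathcal A\mathcal B})$. The plan is to treat \eqref{ident_cond} as a linear system relating the observable function $f$ to the unknown rank function $r_*$, and to show this system can be inverted by Möbius inversion on the product of Boolean lattices. Here $f$ depends only on $\bS$, not on the decomposition. If two rank functions $r_*$ and $r'_*$ both decompose $\bS$ and satisfy \eqref{ident_cond}, then for every nonempty $\mathcal A\subseteq[p]$ and $\mathcal B\subseteq[q]$,
\[
\sum_{\mathcal{A}' \cap \mathcal{A} \neq \emptyset}\ \sum_{\mathcal{B}' \cap \mathcal{B} \neq \emptyset} \bigl(r_*(\mathcal A',\mathcal B')-r'_*(\mathcal A',\mathcal B')\bigr)=0 .
\]
So it suffices to show that the linear map $T: r\mapsto Tr$, defined by $(Tr)(\mathcal A,\mathcal B)=\sum_{\mathcal A'\cap\mathcal A\ne\emptyset}\sum_{\mathcal B'\cap\mathcal B\ne\emptyset} r(\mathcal A',\mathcal B')$, is injective on functions of nonempty pairs.

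The key step is to rewrite the intersection condition in complement form. Note that $\mathcal A'\cap\mathcal A\neq\emptyset$ holds iff $\mathcal A'\not\subseteq\mathcal A^\comp$. Define $g(\mathcal C)=\sum_{\mathcal A'\subseteq \mathcal C}$ for a single factor. Then
\[
(Tr)(\mathcal A,\mathcal B)=R-\sum_{\mathcal A'\subseteq\mathcal A^\comp}\sum_{\mathcal B'}r-\sum_{\mathcal A'}\sum_{\mathcal B'\subseteq\mathcal B^\comp}r+\sum_{\mathcal A'\subseteq\mathcal A^\comp}\sum_{\mathcal B'\subseteq\mathcal B^\comp}r,
\]
where $R$ is the total sum, obtained from $\mathcal A=[p],\mathcal B=[q]$. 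Taking $\mathcal A=[p]$ with $\mathcal B$ varying recovers the marginal partial sums $\sum_{\mathcal A'}\sum_{\mathcal B'\subseteq \mathcal C}r$ for every $\mathcal C\subsetneq[q]$, and symmetrically in the other coordinate. Plugging these back in, knowledge of $Tr$ yields the downset sums
\[
F(\mathcal C,\mathcal D)=\sum_{\mathcal A'\subseteq\mathcal C}\ \sum_{\mathcal B'\subseteq\mathcal D} r(\mathcal A',\mathcal B')
\]
for all $\mathcal C\subseteq[p]$ and $\mathcal D\subseteq[q]$. The empty sets and full sets are handled by the convention that $r$ vanishes when either argument is empty.

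Finally, Möbius inversion on $2^{[p]}\times 2^{[q]}$ recovers $r$ from $F$:
\[
r(\mathcal A',\mathcal B')=\sum_{\mathcal C\subseteq\mathcal A'}\sum_{\mathcal D\subseteq\mathcal B'}(-1)^{|\mathcal A'\setminus\mathcal C|+|\mathcal B'\setminus\mathcal D|}F(\mathcal C,\mathcal D).
\]
Hence $T$ is injective, so $r_*=r'_*$, and the module ranks $\{r_k\}$ are unique; zero-rank pairs are simply absent modules.

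I expect the main obstacle to be the bookkeeping in the inclusion–exclusion step, namely making sure that every downset sum $F(\mathcal C,\mathcal D)$ is genuinely expressible through values of $Tr$ at nonempty arguments. As a cleaner alternative, one can argue by a direct triangularity induction. Order the pairs $(\mathcal A',\mathcal B')$ by decreasing $|\mathcal A'|+|\mathcal B'|$ and difference the equations using single-coordinate inclusion–exclusion, e.g.
\[
\sum_{\mathcal C\subseteq \mathcal A^\comp}(-1)^{|\mathcal C|}\,(Tr)(\mathcal A\cup\mathcal C,\cdot),
\]
which isolates the terms with $\mathcal A'$ exactly equal to a given set. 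I would try the complement/downset reduction first, since it makes invertibility transparent.
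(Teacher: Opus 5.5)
Your proposal is correct and follows essentially the paper's route. The paper also uses the complement split ($\mathcal A'\cap\mathcal A\neq\emptyset$ iff $\mathcal A'\not\subseteq\mathcal A^{\comp}$) and inclusion--exclusion to write the downset sums $\sum_{\mathcal A'\subseteq\mathcal A}\sum_{\mathcal B'\subseteq\mathcal B}r_*(\mathcal A',\mathcal B')$ as $\operatorname{rank}(\bS)-\operatorname{rank}(\bS_{\mathcal A^{\comp}[q]})-\operatorname{rank}(\bS_{[p]\mathcal B^{\comp}})+\operatorname{rank}(\bS_{\mathcal A^{\comp}\mathcal B^{\comp}})$, and then recovers $r_*$ recursively from the singleton modules upward in the product order; your explicit M\"obius inversion formula is just the closed-form solution of that recursion.
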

If $\mathcal{A} = \{i_*\}$ and $\mathcal{B} = \{j_*\}$, then \eqref{ident_cond} says that all modules which contain $i_*$ and $j_*$ contribute their full rank to the matrix block $\bS_{\mathcal{G}_{i_*},\mathcal{H}_{j_*}}$.
If $\mathcal{A} = \{1,\ldots,p\}$ and $\mathcal{B} = \{1,\ldots,q\}$, then \eqref{ident_cond} says that the rank of $\bS$ decomposes additively over all modules; a decomposition which satisfies this condition \eqref{ident_cond} has uniquely identified module structure and ranks.

The maximum possible number of modules is bounded by $K_{\max}=(2^p-1)\times (2^q-1)$, 
which grows exponentially with respect to $p$ and $q$. For example, in  \citet{lock2022bidimensional}, $K_{\max}$ is approximately $8.05\times 10^{9}$ for $p=4$ omics types and $q=29$ cancer types. Therefore, for moderate data dimensions, it is reasonable to assume that the number of modules is much smaller than this upper bound. A critical goal is to identify $\mathcal K = \{k \in [K_{\max}] : r_k > 0\}$ accurately, while considering all $K_{\max}$ potential module structures.

Our goal here is not to compare baseline means across cancer types but to understand shared latent variation across different cancer and omics types. Therefore, following existing practices \citep{lock2013joint,park2020integrative,lock2022bidimensional}, we first center the means for each omics profile within each group of subjects (i.e., each row of $\bY_{ij}$). Then, considering that each $\bY_{ij}$ might have different noise levels, we scale each residualized $\bY_{ij}$ by its noise standard deviation, so that each $\bY_{ij}$ can be assumed to have zero mean and unit noise variance. This is achieved by noting that each $\bY_{ij}$ consists of a low-rank signal and full-rank white noise, and applying methods of \citet{gavish2017optimal} or \citet{shabalin2013reconstruction}, which provide robust estimates of the block-specific noise variances.

\section{Related methods}\label{sec:bidifacplus}

\subsection{BIDIFAC+}
 BIDIFAC+ \citep{lock2022bidimensional} considers the following convex optimization problem:
\begin{align*}
\min_{\bS^{(k)},k\in[K]} \frac{1}{2}||\bY-\sum_{k=1}^K \bS^{(k)}||_F^2+\sum_{k=1}^K\lambda_k||\bS^{(k)}||_*.
\end{align*}
The block  coordinate descent is used as updating $\bS^{(k')}$ when all the others are fixed is done in a closed-form \citep{mazumder2010spectral}. Because of the nuclear norm penalty, the resulting $\widehat{\bS}^{(k)}$ will be low-rank. Provided that $\bY$ is already processed to have unit noise variance, \citet{park2020integrative} and \citet{lock2022bidimensional} recommend $\lambda_k= \sqrt{\sum_{i\in \mathcal{A}_k} m_i} +\sqrt{\sum_{j\in \mathcal{B}_k}n_j}$, which is determined by the probabilistic upper bound on $\sigma_1(\bE)$ over the active support set of $\bS^{(k)}$. Theoretical analysis by \citet{lock2022bidimensional} showed that this specific choice of $\lambda_k$ meets necessary conditions under which the final solution $\lbrace\widehat{\bS}^{(k)}\rbrace_{k\in\widehat{\mathcal{K}}}$ does not reduce to a trivial solution obtained by soft-thresholding the singular values of each $\bY_{ij}$ separately.

Compared to the two-stage procedure that first estimates the rank of each module and then estimates the modules (e.g., \citet{lock2013joint}, \citet{gaynanova2019structural}), BIDIFAC+ allows simultaneous rank selection and estimation without the need to choose $K$ tuning parameters. However, concerns remain regarding computation and identification. First,  BIDIFAC+ is computationally infeasible without any simplification when the maximum possible number of modules is high. 
For example, the pan-omics pan-cancer example considered in \citet{lock2022bidimensional} requires $K_{\max} \approx 8.05\times 10^{9}$ separate singular value decompositions (SVDs) at each iteration. 
\citet{lock2022bidimensional} address this challenge by (i) specifying an upper bound on the number of modules and (ii) embedding a forward module-seeking step in the estimation procedure. However,  its theoretical properties are not well understood. Second, it relies on sufficiently large values of $\lambda_k$'s and, if not, each singular value of $\widehat{\bS}^{(k)}$ may not be sufficiently shrunk toward zero to induce low-rankness and rank additivity of the estimated modules. Third, the nuclear norm penalty in a single matrix completion case reduces each singular value uniformly, which might overshrink signals, a phenomenon known to appear in the nuclear norm penalization \citep{yi2023hierarchical, lock2024empirical}. 

\subsection{MOFA+}\label{sec:mofaplus}

MOFA+ \citep{argelaguet2020mofa} allows for  modeling shared and partially shared variation across multiple omics types and  sample groups through a hierarchical latent factor model. A convenient blockwise representation of MOFA+ for the Gaussian case is
\begin{align*}
\bY_{ij} = \sum_{l=1}^d \bw_{il} \bz_{jl}^\top + \bE_{ij},
\end{align*}
where \(\bw_{il} \in \mathbb{R}^{m_i}\) and \(\bz_{jl} \in \mathbb{R}^{n_j}\) denote the \(l\)th loading and factor vectors for row group \(i\) and column group \(j\), respectively. At a schematic level, MOFA+ places hierarchical priors on both
\(\bw_{il}\) and \(\bz_{jl}\). For example, one may write
$\bw_{il} \mid \alpha_{il}, s_{il} \sim \text{Spike-and-slab}(\alpha_{il}, s_{il})$ and $
\bz_{jl} \mid \tau_{jl}, r_{jl} \sim \text{Spike-and-slab}(\tau_{jl}, r_{jl})$,
where \(\alpha_{il}\) and \(\tau_{jl}\) represent shrinkage parameters based on the Automatic Relevance Determination (ARD) prior, and \(s_{il}\) and \(r_{jl}\) represent spike-and-slab-type inclusion variables.
Accordingly, ARD priors regulate factor activity across omics views through
\(\bw_{il}\) and across sample groups through \(\bz_{jl}\), while spike-and-slab priors
induce additional sparsity in both sets of latent variables. 

For estimation, MOFA+ can be written in terms of a variational evidence lower bound. At the model level, the spike-and-slab prior aims to induce sparsity, but in the actual implementation, inference is carried out via a variational approximation over Bernoulli inclusion variables and Gaussian slab components; consequently, the resulting posterior means or point estimates may remain small but nonzero rather than being exactly zero.

\section{Proposed method}\label{sec:method}

\subsection{Overview}

To mitigate the issues of BIDIFAC+ discussed in Section \ref{sec:bidifacplus}, we introduce GL-BIDIFAC+ (Group Lasso-based BIDIFAC+), which uses alternating least squares combined with group lasso penalties. Without loss of generality, let
$\bU=[\bU^{(1)},\dots,\bU^{(K)}]\in\mathbb{R}^{m\times d}$ and $\bV=[\bV^{(1)},\dots,\bV^{(K)}]\in\mathbb{R}^{n\times d}$, so that $\bS=\sum_{k=1}^K \bU^{(k)}\bV^{(k)\top}=\bU\bV^\top$.
 We view $\sum_{k=1}^K \bS^{(k)}$ as a product of $\bU$ and $\bV^\top$ for $d\leq\min(m,n)$, and consider the following penalized objective:
\begin{equation}
\argmin_{\bU\in\mathbb{R}^{m\times d},\bV\in\mathbb{R}^{n\times d}} \underbrace{ 
\frac{1}{2|\Omega|}||\bP_{\Omega}\circ (\bY-\bU\bV^\top)||_F^2}_{\mathcal{L}(\bU, \bV;\bY)}
+\lambda_U \underbrace{\sum_{l=1}^d\sum_{i=1}^{p} \omega_{il}^{(U)}||\bu_{\mathcal{G}_i,l}||_2}_{\mathcal{R}(\bU)}
+\lambda_V \underbrace{\sum_{l=1}^d\sum_{j=1}^{q} \omega_{j,l}^{(V)}||\bv_{\mathcal{H}_j,l}||_2}_{\mathcal{R}(\bV)},
\label{eq:obj_uv}
\end{equation}
where $\lambda_U,\lambda_V>0$ are tuning parameters and $d$, specified by the user, is an upper bound on the rank of the estimate of $\bS$. $\omega_{il}^{(U)}>0$ and $\omega_{jl}^{(V)}>0$ are weights for the penalties for $\bu_{\mathcal{G}_i,l}$ and $\bv_{\mathcal{H}_j,l}$, respectively, which we set to 1 in this section and later extend to adaptive weights in Section \ref{sec:agl}. $\Omega$ is the set of indices corresponding to the observed entries of $\bY$, and entries of $\bP_{\Omega}\in\mathbb{R}^{m\times n}$ are 1 on $\Omega$ and 0 elsewhere, which makes GL-BIDIFAC+ adaptable to incomplete data matrices (see Section \ref{sec:imp}). Since \eqref{eq:obj_uv} is bi-convex with respect to $\bU$ and $\bV$, $\bU$ and $\bV$ can each be updated iteratively with the weighted group lasso while fixing the other \citep{yuan2006model}. We use proximal gradient descent in each step, so that updating $\bV$ given $\bU$ has a computational cost of $O(mnd + Tnd^2)$ where $T$ is the iteration number for the proximal-gradient subproblem.

Compared to BIDIFAC+, the proposed GL-BIDIFAC+ offers improvements in computation and identification.  It mitigates these issues because GL-BIDIFAC+ requires only an upper bound $d$ on the latent dimension, rather than the exact number and ranks of modules. These are aided by the following proposition.

\begin{proposition}\label{prop1}
    Suppose that a stationary point for \eqref{eq:obj_uv} is achieved with $\widehat{\bU}$ and $\widehat{\bV}$. 
    \begin{enumerate}
        \item     If $\hat{\bu}_{\mathcal{G}_i,l}=\bzero$ for all $i\in[p]$, then $\hat{\bv}_{\mathcal{H}_j,l}=\bzero$ for all $j\in[q]$, and vice versa.

        \item $\hat{\bu}_{\mathcal{G}_i,l}\neq\bzero$ and $\hat{\bv}_{\mathcal{H}_j,l}\neq \bzero$ if and only if $\hat{\bu}_{\mathcal{G}_i,l}\hat{\bv}_{\mathcal{H}_j,l}^\top\neq \bzero$.
    \end{enumerate}
\end{proposition}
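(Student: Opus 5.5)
Part 1 follows from the subgradient (KKT) conditions of \eqref{eq:obj_uv} with respect to the block $\hat{\bv}_{\mathcal{H}_j,l}$, holding $\widehat{\bU}$ fixed. Part 2 is then an elementary fact about outer products of vectors. Throughout, a stationary point means $\bzero$ lies in the (Clarke or, blockwise, convex) subdifferential of the objective. Because the objective is convex in $\bV$ for fixed $\widehat{\bU}$, this is the same as $\widehat{\bV}$ minimizing the group lasso subproblem given $\widehat{\bU}$, and symmetrically for $\widehat{\bU}$.

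For part 1, suppose $\hat{\bu}_{\mathcal{G}_i,l}=\bzero$ for all $i$, i.e.\ the whole column $\hat{\bu}_{:,l}=\bzero$. The gradient of $\mathcal{L}$ with respect to $\bv_{:,l}$ is
\[
-\frac{1}{|\Omega|}\bigl(\bP_\Omega\circ(\bY-\widehat{\bU}\widehat{\bV}^\top)\bigr)^\top \hat{\bu}_{:,l}.
\]
This vanishes, so its restriction to $\mathcal{H}_j$ vanishes too. The blockwise optimality condition for $\bv_{\mathcal{H}_j,l}$ then reads $\bzero\in\lambda_V\omega^{(V)}_{jl}\,\partial\|\hat{\bv}_{\mathcal{H}_j,l}\|_2$. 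If $\hat{\bv}_{\mathcal{H}_j,l}\neq\bzero$, the subdifferential is the single vector $\hat{\bv}_{\mathcal{H}_j,l}/\|\hat{\bv}_{\mathcal{H}_j,l}\|_2$, which has unit norm. Since $\lambda_V\omega^{(V)}_{jl}>0$, this contradicts the condition, so $\hat{\bv}_{\mathcal{H}_j,l}=\bzero$ for every $j$.

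An equivalent check: decreasing $\|\hat{\bv}_{\mathcal{H}_j,l}\|_2$ leaves the loss unchanged, because column $l$ of $\widehat{\bV}$ does not enter $\widehat{\bU}\widehat{\bV}^\top$ when $\hat{\bu}_{:,l}=\bzero$. It does, however, strictly reduce the penalty, so the one-sided directional derivative along $-\hat{\bv}_{\mathcal{H}_j,l}$ is strictly negative, contradicting stationarity. The converse ("vice versa") follows by the identical argument with the roles of $\bU,\lambda_U,\omega^{(U)}$ and $\bV,\lambda_V,\omega^{(V)}$ swapped.

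For part 2, note that $\hat{\bu}_{\mathcal{G}_i,l}\hat{\bv}_{\mathcal{H}_j,l}^\top$ is a rank-at-most-one outer product. Its $(a,b)$ entry is $\hat u_a\hat v_b$. If both vectors are nonzero, pick $a,b$ with $\hat u_a\neq0$ and $\hat v_b\neq0$; then that entry is nonzero. Conversely, if either vector is zero, every entry vanishes. This part needs no stationarity at all.

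The only delicate point is making precise what "stationary point" means for the nonsmooth, nonconvex objective \eqref{eq:obj_uv}. I would state that the blockwise subgradient conditions hold at any stationary point in the Clarke sense, since the objective is locally Lipschitz and the penalty is separable across blocks. Alternatively, the directional-derivative argument above avoids subdifferential calculus entirely.
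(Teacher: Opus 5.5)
Your proposal is correct and follows the paper's proof. The paper argues exactly as in your ``equivalent check'': it scales the whole column $\hat{\bv}_{:,l}$ toward zero (rather than one block $\hat{\bv}_{\mathcal{H}_j,l}$ at a time), which leaves the loss and the $\bU$-penalty unchanged while strictly lowering the $\bV$-penalty; your KKT argument is the first-order form of that same observation, and part 2 is handled identically as an elementary fact about outer products. Your phrasing in terms of a strictly negative one-sided directional derivative is, if anything, slightly more precise than the paper's ``the objective strictly decreases.''
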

Proposition~\ref{prop1}.1 suggests that GL-BIDIFAC+ can effectively prune the rank of  $\widehat{\bS}=\widehat{\bU}\widehat{\bV}^\top$ to be lower than $d$ specified in initialization. In terms of computation, although each iteration scales with the initialized factor dimension $d$, the group lasso updates quickly induce column-wise sparsity: once a column is shrunk to exactly zero, it remains inactive and can be dropped from subsequent updates. As a result, after a few iterations the effective dimension is the number of active columns, which is often much smaller than $d$, yielding scalability. Proposition \ref{prop1}.2 is useful because, after getting $\widehat{\bU}$ and $\widehat{\bV}$, it is sufficient to check the active set of each column of $\bU$ and $\bV$ to determine the module to which each column belongs.

Recent results on overparameterized factorized matrix estimation suggest that sparsity-inducing regularization can reduce the effective number of active factors, even when the initialized factor dimension is larger than necessary \citep{li2025solving}. 
Analogously, we find that our group-regularized factorization is largely insensitive to overspecification of $d$, suggesting an analogous effective rank selection behavior. This is computationally appealing compared to other methods related to JIVE, that directly tune each rank based on resampling or cross-validation \citep{lock2013joint, gaynanova2019structural, prothero2024data}.

The next proposition shows how our  objective relates to thresholding singular values.
\begin{proposition}[Theorem 1 of \cite{fan2019factor}]\label{prop:shrinking}
For $\bS\in \mathbb{R}^{m\times n}$    with $\operatorname{rank}(\bS)=r\leq d\leq \min(m,n)$, 
\begin{align*}
\min_{\bS=\sum_{l=1}^d\bu_l\bv_l^\top} \sum_{l=1}^d ||\bu_l||_2+||\bv_l||_2=2\sum_{l=1}^r \sqrt{\sigma_l(\bS)}.
\end{align*}
\end{proposition}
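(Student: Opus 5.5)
The plan is to reduce the sum of norms to a sum of products, identify the minimum of the latter as the Schatten-$1/2$ quasi-norm, and then exhibit a factorization attaining it.

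\emph{Reduction to products.} By AM--GM, $\|\bu_l\|_2+\|\bv_l\|_2\ge 2\sqrt{\|\bu_l\|_2\|\bv_l\|_2}$, with equality when $\|\bu_l\|_2=\|\bv_l\|_2$. Each rank-one term is invariant under $\bu_l\mapsto c\bu_l$, $\bv_l\mapsto c^{-1}\bv_l$ with $c>0$, so we may balance the norms term by term. Hence the left-hand side equals
\[
2\min_{\bS=\sum_{l\le d}\bu_l\bv_l^\top}\sum_{l=1}^d \sqrt{\|\bu_l\bv_l^\top\|_*}.
\]
It therefore suffices to show that for any decomposition $\bS=\sum_{l=1}^d \bM_l$ into rank-one (or zero) matrices,
\[
\sum_l \|\bM_l\|_*^{1/2}\ \ge\ \sum_{l=1}^r \sigma_l(\bS)^{1/2},
\]
with equality achieved by the truncated SVD.

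\emph{Achievability.} Write $\bS=\sum_{l\le r}\sigma_l \ba_l\bh_l^\top$ and set $\bu_l=\sqrt{\sigma_l}\,\ba_l$, $\bv_l=\sqrt{\sigma_l}\,\bh_l$ for $l\le r$, with zero columns for $r<l\le d$. This uses $r\le d$ and gives the value $2\sum_{l\le r}\sqrt{\sigma_l}$.

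\emph{Lower bound (main obstacle).} The set of matrices with a rank-one decomposition satisfying $\sum_l\|\bM_l\|_*^{1/2}\le 1$ is not convex, so nuclear-norm duality does not apply directly. I would use the known Schatten-$p$ quasi-norm inequality for $0<p\le 1$:
\[
\|\bA+\bB\|_{S_p}^p\le \|\bA\|_{S_p}^p+\|\bB\|_{S_p}^p,
\]
i.e.\ $\sum_l\sigma_l(\bA+\bB)^p\le\sum_l\sigma_l(\bA)^p+\sum_l\sigma_l(\bB)^p$. This follows from Rotfel'd/Thompson-type results: for concave $f\ge0$ with $f(0)=0$, $\sum_l f(\sigma_l(\bA+\bB))\le\sum_l f(\sigma_l(\bA))+\sum_l f(\sigma_l(\bB))$. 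The standard route is via the singular-value majorization $\sigma(\bA+\bB)\prec_w \sigma(\bA)\oplus\sigma(\bB)$, padded with zeros to the appropriate length, together with the fact that $f$ is concave, increasing and subadditive. Take $f(x)=\sqrt{x}$ and induct on the number of terms. Each rank-one $\bM_l$ has $\|\bM_l\|_{S_{1/2}}^{1/2}=\sqrt{\|\bM_l\|_*}$, so
\[
\sum_{l\le r}\sqrt{\sigma_l(\bS)}\le\sum_l\sqrt{\|\bM_l\|_*}.
\]
Combined with achievability, this proves the identity.

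If a self-contained argument is preferred over citing Rotfel'd, a fallback is to prove the weak majorization via Ky Fan's maximum principle. The partial sums satisfy $\sum_{l\le k}\sigma_l(\bA+\bB)\le\sum_{l\le k}\sigma_l(\bA)+\sum_{l\le k}\sigma_l(\bB)$, but this alone gives only $\prec_w$ against the entrywise sum $\sigma(\bA)+\sigma(\bB)$, which is too weak for subadditive concave $f$. The genuinely needed ingredient is the stronger Thompson statement with the direct sum $\sigma(\bA)\oplus\sigma(\bB)$, so this is the step I expect to require the most care.
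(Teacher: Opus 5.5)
The paper contains no proof of Proposition~\ref{prop:shrinking}; it imports the result as Theorem~1 of \citet{fan2019factor}, so your argument has to stand on its own. Its skeleton does. The per-term AM--GM balancing, the reduction to $\sum_l\|\bM_l\|_*^{1/2}\ge\sum_{l\le r}\sigma_l(\bS)^{1/2}$, the balanced truncated SVD (which uses $r\le d$), and the appeal to subadditivity of $\bA\mapsto\sum_l\sigma_l(\bA)^{1/2}$ are all correct. That subadditivity is a genuine theorem (Rotfel'd; Thompson, 1976), so citing it completes the proof.

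What is wrong is your explanation of why it holds, and that is exactly the step your self-contained fallback would need. The majorization $\sigma(\bA+\bB)\prec_w\sigma(\bA)\oplus\sigma(\bB)$ is false: for $\bA=\bB=\be_1\be_1^\top$ the largest entries are $2$ and $1$. Moreover, weak submajorization only controls increasing \emph{convex} functions, so it could never give an upper bound for a concave $f$. For example, $(1,1)\prec_w(2,0)$, yet $\sqrt{1}+\sqrt{1}>\sqrt{2}$.

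The majorization behind Rotfel'd's inequality runs the other way. For positive semidefinite $\bX_1,\bX_2$, the Gram matrix of $[\bX_1^{1/2},\bX_2^{1/2}]$ has spectrum $\lambda(\bX_1+\bX_2)$ padded with zeros. Pinching it to its diagonal blocks gives $\lambda(\bX_1)\oplus\lambda(\bX_2)\prec\lambda(\bX_1+\bX_2)\oplus\bzero$. Concavity and $f(0)=0$ then yield $\operatorname{tr}f(\bX_1+\bX_2)\le\operatorname{tr}f(\bX_1)+\operatorname{tr}f(\bX_2)$. General $\bA,\bB$ reduce to this case through three facts:
\begin{enumerate}
\item Thompson's matrix triangle inequality $|\bA+\bB|\le\mathbf{Q}_1|\bA|\mathbf{Q}_1^\top+\mathbf{Q}_2|\bB|\mathbf{Q}_2^\top$, for suitable orthogonal $\mathbf{Q}_1,\mathbf{Q}_2$;
\item Weyl monotonicity of eigenvalues under the positive semidefinite order;
\item monotonicity of $f$.
\end{enumerate}

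A shorter self-contained route uses the factorization directly. Write $\bU=[\bu_1,\ldots,\bu_d]$ and $\bV=[\bv_1,\ldots,\bv_d]$.
\begin{enumerate}
\item Horn's inequality $\prod_{l\le k}\sigma_l(\bU\bV^\top)\le\prod_{l\le k}\sigma_l(\bU)\sigma_l(\bV)$, combined with the convex increasing map $t\mapsto e^{t/2}$, gives $\sum_l\sigma_l(\bS)^{1/2}\le\sum_l\{\sigma_l(\bU)\sigma_l(\bV)\}^{1/2}$.
\item Cauchy--Schwarz and AM--GM bound the right side by $\frac12(\|\bU\|_*+\|\bV\|_*)$.
\item The triangle inequality applied to $\bU=\sum_l\bu_l\be_l^\top$ gives $\|\bU\|_*\le\sum_l\|\bu_l\|_2$, and likewise for $\bV$.
\end{enumerate}
This proves the lower bound without per-term balancing and without invoking Thompson or Rotfel'd.
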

When fixing $i=i^*$ and $j=j^*$ and holding all other $i,j$ fixed, the part of the objective involving the block \((i^*,j^*)\) has the same variational form as a single-matrix factorization of \(\bY_{i^*j^*}\) with a Schatten-\(1/2\)-type penalty as $\underset{\bS=\sum_{l=1}^d \bu_l \bv_l^\top}{\min}
(\lambda_1\sum_{l=1}^d ||\bu_l||_2+\lambda_2\sum_{l=1}^d ||\bv_l||_2)
=
2\sqrt{\lambda_1\lambda_2}\ \sum_{l=1}^r \sqrt{\sigma_l(\bS)}$. 
Compared to the nuclear norm penalty, the Schatten-1/2 penalty is weaker for larger singular values. This  explains how GL-BIDIFAC+ would mitigate a limitation of the structured nuclear norm penalty in BIDIFAC+ that may over-shrink when the signal-to-noise ratio is high \citep{yi2023hierarchical, lock2024empirical}.

Finally, with identifiability in mind, note that we can rewrite \eqref{eq:obj_uv} as
\begin{align*}
    &\argmin_{\bU\in\mathbb{R}^{m\times d},\bV\in\mathbb{R}^{n\times d}}
\frac{1}{2|\Omega|}|| \bP_{\Omega} \circ(\bY-\bU\bV^\top)||_F^2
+\lambda_U \mathcal{R}(\bU)
+\lambda_V \mathcal{R}(\bV) \\
= &\argmin_{\bU, \bV, \bS : \operatorname{rank}(\bS) \leq d} \left\{ \frac{1}{2|\Omega|}|| \bP_{\Omega} \circ(\bY-\bS) ||_F^2 + \min_{\bU\bV^{\tp} = \bS} \left\{ \lambda_U \mathcal{R}(\bU)
+\lambda_V \mathcal{R}(\bV) \right\} \right\}.
\end{align*}
Thus, in addition to supporting group sparsity in the entries of the optimal $(\bU,\bV)$, the penalty term of \eqref{eq:obj_uv} favors certain representatives of the class of factorized decompositions of $\bS$, and if $\lambda_U \asymp \lambda_V$, it will encourage balanced magnitudes of the two sides of the factorization.

\subsection{Tuning parameter selection}\label{sec:tuning}

We motivate tuning parameter selection by studying each iterative step of our algorithm, which is a group lasso problem (see Section~\ref{sec:theory}).

Conditional on $\bV$, updating $\bU$ in \eqref{eq:obj_uv} reduces to a group lasso problem with design matrix $\bX_U(\bV) = \bV \otimes \bI_m$. 
When the perturbation on the factor matrix $\bV$ is sufficiently small, our Proposition~\ref{prop:one_step_U}, along with well-established theory (e.g., \citet{negahban2012unified}) motivates
$$
    \lambda_{U}
\asymp \frac{\|\bV\|_{1,2}}{mn} \left( \sqrt{\max_{1 \leq i \leq p} m_i} + \sqrt{\log(pd)} \right),
$$
where $\|\cdot\|_{1,2}$ denotes the maximum column norm of a matrix.
Results for updating $\bV$ conditional on $\bU$ are analogous.
Therefore, we set the following penalty for our objective:
\begin{equation}
\lambda_U
=
2\cdot\sqrt{{\sigma}_1(\bS)}\cdot \left(
\frac{\sqrt{\max_{1 \leq i \leq p} m_i} + \sqrt{\log(pd)}}{mn}\right),
~~
\lambda_V
=
2\cdot \sqrt{{\sigma}_1(\bS)}\cdot \left(
\frac{\sqrt{\max_{1 \leq j \leq q} n_j} + \sqrt{\log(qd)}}{mn}\right).
\label{eq:lambda_used}
\end{equation}
 These choices are reasonable because, when considering the SVD of $\bS=\bU_S\bD_S\bV_S^\top$ and a balanced factorization of $\bS = \bU\bV^\top$ with $\bU = \bU_S\bD_S^{1/2}$ and $\bV = \bV_S\bD_S^{1/2}$, we have $\sigma_1(\bU)=\sigma_1(\bV) = \sigma_1(\bD_S^{1/2})=\sqrt{\sigma_1(\bS)}$. When the group sizes (e.g., $m_i,i\in[p]$) are heterogeneous, we allow $\lambda_U$ or $\lambda_V$ to differ by $i$ (or $j$) so that, for $i\in[p]$, we set $\lambda_{U,i}=2\cdot\sqrt{{\sigma}_1(\bS)}\cdot \left(
\frac{\sqrt{ m_i }+\sqrt{\log(pd)}}{mn}\right)$. 

In practice, in \eqref{eq:lambda_used}, $\sigma_1(\bS)$ is unknown and $d$ must be specified. For estimating $\sigma_1(\bS)$, we use empirical variational Bayes shrinkage results from \citet{nakajima2012perfect} and \citet{lock2024empirical}, yielding a closed-form estimator. For a threshold $\theta = \sqrt{m+n+\sqrt{mn(\kappa+1/\kappa)}}$ with $\kappa$ being the solution of $f(x)=x\sqrt{n/m}\log(x\sqrt{m/n}+1)+x\sqrt{m/n}\log(x\sqrt{n/m}+1)-1=0$,
\begin{align}\label{eq:sigma1S}
\hat{\sigma}_1(\bS) \;=\;
\begin{cases}
0 & \text{if }\sigma_1(\bY) < \theta,\\
\dfrac{\sigma_1(\bY)^2-(m+n)+\sqrt{(\sigma_1(\bY)^2-(m+n))^2-4mn}}{2\sigma_1(\bY)}
& \text{otherwise.}
\end{cases}    
\end{align}
We found this choice to work well across a range of signal-to-noise ratios. For choosing $d$, we note that $\sigma_1(\bE)$ is probabilistically bounded by $\sqrt{m}+\sqrt{n}$ \citep{rudelson2010non}; therefore, we set $d= \sum_{l=1}^{\min(m,n)} I(\sigma_l(\bY)>\sqrt{m}+\sqrt{n})$.
Our complete algorithm for GL-BIDIFAC+ is summarized in Algorithm \ref{alg:gl-bidifac}.

\begin{algorithm}[t]
\caption{GL-BIDIFAC+ with preprocessed $\bY$}
\label{alg:gl-bidifac}
\begin{algorithmic}[1]
\Statex \textbf{Initialization}
\State Set $t=1$ and $d=\sum_{l=1}^{\min(m,n)}I({\sigma}_l(\bY)>\sqrt{m}+\sqrt{n})$.
\State Initialize $\bU^{(0)}$, $\bV^{(0)}$, and $\bS^{(0)}$ following Section \ref{sec:init}, and adaptive penalty weights following Section~\ref{sec:agl}.
\State Compute $\hat{\sigma}_1(\bS)$ using \eqref{eq:sigma1S}.
\State Set $\lambda_U
=
2\cdot\sqrt{\hat{\sigma}_1(\bS)}\cdot \left(
\frac{\sqrt{\max m_i }+\sqrt{\log(pd)}}{mn}\right)$ and $
\lambda_V
=
2\cdot \sqrt{\hat{\sigma}_1(\bS)}\cdot \left(
\frac{\sqrt{\max n_j}+\sqrt{\log(qd)}}{mn}\right)$.

\vspace{0.25em}
\Statex \textbf{Estimation}
\While{$t<T$ and $||\bS^{(t)}-\bS^{(t-1)}||_F / ||\bS^{(t-1)}||_F\ \geq\epsilon$}
    \State Update $\bV^{(t)}$ given $\bU^{(t-1)}$ by solving \eqref{eq:obj_uv} with  $\lambda_V$.
    \State Update $\bU^{(t)}$ given $\bV^{(t)}$ by solving \eqref{eq:obj_uv} with   $\lambda_U$.
    \State Prune $l$th column of $\bU^{(t)}$ and $\bV^{(t)}$ for each $l$ when $\bu^{(t)}_{:,l}=\bzero$ or $\bv^{(t)}_{:,l}=\bzero$.
    \State  Set 
$\bS^{(t)}=\bU^{(t)}\bV^{(t)\top}$.
    \State $t\gets t+1$

\EndWhile

\State Set $\widehat{\bU}\gets \bU^{(t)}$, $\widehat{\bV}\gets \bV^{(t)}$, $\widehat{\bS}\gets \bS^{(t)}$.

\vspace{0.25em}
\Statex \textbf{Module separation}

\State Let $A_l = \{i\in[p] : ||\hat{\bu}_{\mathcal{G}_i,l}||_2>0\}$ and
$B_l = \{j\in[q] : ||\hat{\bv}_{\mathcal{H}_j,l}||_2>0\}$ for each column $l$ of $\widehat{\bU}$ and $\widehat{\bV}$.

\State Group columns of $\widehat{\bU},\widehat{\bV}$ by identical  $(A_l,B_l)$ to 
identify columns $C_k$ that belong to the $k$th module.

\State Compute $\widehat \bS^{(k)}=\sum_{l\in C_k}\hat{\bu}_{:,l}\hat{\bv}_{:,l}^\top
$.

\State Set $\widehat{K}$ to be the number of non-empty modules.

\vspace{0.25em}
\State \Return  $\{\widehat{\bS}^{(k)}\}_{k=1}^{\widehat K}$.
\end{algorithmic}
\end{algorithm}

\subsection{Adaptive group lasso adjustment}\label{sec:agl}

Although the penalty choice in \eqref{eq:lambda_used} is well-motivated, in practice it may still be overly aggressive when the signal-to-noise ratio is small. Weak but nonzero group-specific factors may be shrunk to zero too early in the alternating updates, which can lead to underestimation of module supports and poorer signal recovery. To reduce this bias, we consider an adaptive group lasso modification of \eqref{eq:obj_uv} \citep{wang2008note}.

Let $\widetilde{\bU}$ and $\widetilde{\bV}$ be pilot estimates obtained from GL-BIDIFAC+ with the same initialization but a smaller  penalty level. For each $i,j,l$, define the adaptive weights as follows:
\begin{align*}
{\omega}^{(U)}_{i,l}= \frac{(||\widetilde{\bu}_{G_i,l}||_2+\tau_U)^{-\gamma}}{\text{median}_{i,l}\lbrace||\widetilde{\bu}_{G_i,l}||_2+\tau_U)^{-\gamma}\rbrace} \qquad \text{and}
\qquad
{\omega}^{(V)}_{j,l}=\frac{(||\widetilde{\bv}_{H_j,l}||_2+\tau_V)^{-\gamma}}{\text{median}_{j,l}\lbrace||\widetilde{\bv}_{H_j,l}||_2+\tau_V)^{-\gamma}\rbrace},    
\end{align*}
where $\tau_U,\tau_V>0$ are small constants and $\gamma>0$ controls the strength of adaptation (set to $\gamma=2$ in this paper). 
Since $h(u) = (u + \tau)^{-\gamma}$ is a decreasing function of $u$, well chosen weights should serve to increase penalties on inactive groups and decrease penalties on active groups, widening the valid range of tuning parameters (see Proposition \ref{prop:one_step_U}).
The denominator terms center the weights around one to maintain the rates of growth for $\lambda_U$ and $\lambda_V$ given in \eqref{eq:lambda_used}.

\subsection{Initialization}\label{sec:init}

Assuming the overall rank of $\bS$ is known or can be conservatively estimated, we initialize the problem using the scaled rank-$d$ SVD of the observed $\bY$ for $d \geq r$. In GL-BIDIFAC+, however, the columns of $\bU$ and $\bV$ are expected to have group-structured sparse supports, and an arbitrary orthogonal rotation may mix module-specific directions and obscure this structure. Following Proposition \ref{prop:one_step_U}, the group lasso updates may therefore benefit from an initialization that is better aligned with the module structure, up to a signed permutation of the columns and possible within-module rotation and rescaling. Hence, we apply an additional \emph{varimax} rotation to the leading right or left singular vectors of $\bY$ to encourage sparse, module-aligned initial factors. \citet{rohe2023vintage} show that this classical varimax procedure can provably recover factors up to column permutation, rather than orthogonal transformation, in a different but related setting of principal components analysis, where the entries of the left-hand side factor matrix are random with \emph{leptokurtotic} distributions, including some sparse settings. In our experiments, this choice performed better than the (unrotated) SVD initialization in our simulations, and we leave the theoretical properties of varimax-based initialization for future work.

\subsection{Imputation}\label{sec:imp}

We note that \eqref{eq:obj_uv}  can be viewed as maximizing the log-posterior of $\{\bU,\bV\}$ under (i) a Gaussian likelihood for $\bY$ corresponding to $\mathcal{L}(\bU,\bV;\bY)$ and (ii) independent multivariate Laplace priors on each grouped factor $\bu_{\mathcal{G}_i,l}$, $i\in[p]$, $l\in[d]$, and $\bv_{\mathcal{H}_j,l}$, $j\in[q]$, $l\in[d]$, corresponding to the group lasso penalties $\mathcal{R}(\bU)$ and $\mathcal{R}(\bV)$. Alternatively, \citet{casella2010penalized} showed that the group lasso can be represented hierarchically through Gaussian priors with group-specific variance parameters and Gamma hyperpriors on those variances. This interpretation suggests a natural extension of GL-BIDIFAC+ to impute missing values following the probabilistic matrix factorization of \citet{mnih2007probabilistic}. 

We use the same two-stage adaptive procedure as in Section \ref{sec:agl}. In the first stage, we learn weights $\omega_{i,l}^{(U)}$ and $\omega_{j,l}^{(V)}$ by solving the observed-entry objective with a reduced penalty level, and then the final estimator is obtained by solving the same observed-entry objective with these weights. The resulting estimates $\widehat{\mathbf U}\widehat{\mathbf V}^\top$ are used to impute missing values. The tuning parameters are computed once from the initially observed data. Specifically, when estimating $\sigma_1(\bS)$ with missing data, we adjust for the observation rate $|\Omega|/(mn)$. 

\section{Theoretical analysis}\label{sec:theory}

For simplicity, we also assume that the full matrix is observed, i.e. $|\Omega| = mn$.
The objective in \eqref{eq:obj_uv} is non-convex jointly in $\bU$ and $\bV$, but is optimized by alternating between convex group lasso subproblems for $\bU$ and $\bV$. If one factor matrix is fixed at its population counterpart, the update of the other factor matrix reduces to a standard group lasso problem, for which support recovery conditions are well studied \citep{negahban2012unified, wainwright2019high}. Motivated by this observation, we analyze each alternating update locally by treating the current iterate, for example $\bU^{(t)}$, as a perturbed version of the population factor matrix. 

We analyze the update step $\bU$ for a fixed $\widetilde{\bV}$, which may not exactly factorize the underlying matrix $\bS$.
In particular, we allow the column dimension of $\widetilde{\bV}$, denoted by $d$, to exceed the rank of $\bS$, denoted by $r \leq d$.
We show that the group lasso problem can correctly prune the rank of the corresponding $\bU$ to match the module structure of an exact group sparse factorization. The update of $\bV$ given $\widetilde{\bU}$ and corresponding theoretical results are symmetric.

Given $\widetilde{\bV}$, the $\bU$-update can be written as a weighted group lasso problem:
\begin{equation} \label{eq:glasso}
\widehat{\bbeta}^U
=
\argmin_{\bbeta}~
\frac{1}{2}
||
\by -\bX_U(\widetilde{\bV})\bbeta
||_2^2
+
(mn) \lambda_U
\sum_{l=1}^{d}
\sum_{i=1}^p
\omega^{(U)}_{il} 
||
\bbeta_{i,l}
||_2
\end{equation}
where
$
\by
=
\operatorname{vec}(\bY),
\bX_U(\widetilde{\bV})
=
\widetilde{\bV}\otimes \bI_m,
\bbeta=\operatorname{vec}(\bU),
$ and the  $\bbeta_{i,l}$ corresponds to $\bu_{\mathcal G_i,l}$.

We will assume that, without loss of generality, the first $r$ columns of the factor matrix $\widetilde{\bV}$ are a perturbation of the true factor matrix $\bV^*$ for some representative factorization $\bS_*=\bU_*\bV_*^\top$ with group sparse columns denoted by
$
\bU_* = [\bU_*^{(1)}, \dots, \bU_*^{(K)}], \quad \bV_* = [\bV_*^{(1)}, \dots, \bV_*^{(K)}],
$
having minimal column dimension $r$.
In this one-sided problem, the target of optimization, including the group sparsity pattern, is fixed by the factor representation of $\widetilde{\bV}$.
Let
$
\mathcal A^*_U
=
\left\{
(i,l):\|(\bu_*)_{\mathcal G_i,l}\|_2>0
\right\}
$ denote the true active row-group support for the columns of $\bU_*$, and denote $\bP_{r,d} = (\bI_r , \bm{0}) \in \mathbb{R}^{r \times d}$. Then
\[
\bY
=
\bU_*\bV_*^\top+\bE
=
\bU_*\bP_{r,d}\widetilde{\bV}^{\top}
+
\bU_*\bP_{r,d}(\bV_*\bP_{r,d} -\widetilde{\bV})^\top
+
\bE,
\]
defining a coefficient vector $\bbeta^* = \vecz(\bU_*\bP_{r,d})$,  we can write
\begin{equation} \label{eq:error_decomp}
    \by = \bX_U(\widetilde{\bV})\bbeta^* + \vecop{\bE} + \{\bX_U(\bV_*\bP_{r,d}) - \bX_U(\widetilde{\bV})\} \bbeta^*.
\end{equation}
Thus, $\widetilde{\bV}$ contributes to an additional perturbation term in the group lasso problem, but additional spurious columns only extend the dimension of $\bbeta^*$ with zero entries, without affecting the sparsity pattern $\mathcal{A}^*_{U}$.
We assume the following sufficient conditions hold.
\begin{assumption}
\label{ass:local_U}
Assume that the following conditions are satisfied by $\widetilde{\bV}$.
\begin{enumerate}
    \item \textit{Design near-orthogonality.}
    The columns of $\widetilde{\bV}$ are nearly orthogonal in the sense that
    $
    \max_{l \neq l'} \lvert \tilde{\bv}_{l}^{\top} \tilde{\bv}_{l'} \rvert \leq \alpha \min_{l} (\tilde{\bv}_{l}^{\top} \tilde{\bv}_{l}) / r
    $
    for a constant $\alpha < 1/2$. 

    \item \textit{Design conditioning.}  $\min_{l} (\tilde{\bv}_{l}^{\top} \tilde{\bv}_{l}) \geq \kappa \lVert \widetilde{\bV} \rVert^2_{1,2}$ for a constant $\kappa \in (0,1)$.

    \item \textit{Subgaussian noise.}
    The entries of $\bE$ are mutually independent and subgaussian, with subgaussian variance proxy at most $\sigma^2$.

\end{enumerate}
\end{assumption}

We make near-orthogonality and conditioning assumptions on the current estimate $\widetilde{\bV}$, since this is the available object to check the conditions and calibrate the tuning parameter.

\begin{proposition}[One-step support recovery]
\label{prop:one_step_U}
Suppose Assumption \ref{ass:local_U} holds. Then the solution of the $\bU$-update recovers the true active group support of $\bU_*$
\[
\widehat{\mathcal A}_U
=
\left\{
(i,l):\|\hat{\bu}_{\mathcal G_i,l}\|_2>0
\right\}
=
\mathcal{A}^*_U.
\]
with probability at least $1 - \eta$, as long as $\lambda_U$ satisfies the two inequalities
\begin{align}
    \lambda_U &> \frac{3 \lVert \widetilde{\bV} \rVert_{1,2} }{mn(1-2\alpha)\omega_{\min}} \max\bigg\{ \frac{2 (1 - \alpha) \sigma}{c_E} \left\{ \sqrt{M\log(5)} + \sqrt{\log (2pd/\eta)} \right\},r C_{\beta_U} \| \bV_* - \widetilde{\bV} \bP_{r,d}^{\tp} ~ \|_{1,2} \bigg\}, \nonumber \\ \lambda_U &\leq \frac{c_{\beta_U} (1 - \alpha) \kappa \lVert \widetilde{\bV} \rVert^2_{1,2} }{mn(C_{\alpha} \omega_{\min} + \omega_{\max})}, \label{eq:tuning_range}
\end{align}
where $c_E>0$ is a universal constant, $C_{\beta_U} = \max_{\mathcal{A}^*_U} \lVert \bbeta_{i,l} \rVert_2$, $c_{\beta_U} = \min_{\mathcal{A}^*_U} \lVert \bbeta_{i,l} \rVert_2$, $\omega_{\max} = \max_{\mathcal{A}^*_U} \omega_{il}^{(U)}$, $\omega_{\min} = \min_{\mathcal{A}^{*,\comp}_U} \omega_{il}^{(U)}$ with $\omega_{\max} \leq \omega_{\min}$, $C_{\alpha}$ is a constant depending only on $\alpha$ in Assumption \ref{ass:local_U}.1, and $M = \max_{1 \leq i \leq p} m_i$ is the maximum group size. 
Moreover, the estimation error on the active groups satisfies
\begin{equation} \label{eq:glasso_error}
\max_{(i,l) \in \mathcal{A}^*_U} \|\widehat{\bbeta}^U_{i,l}
-
\bbeta_{i,l}^*
\|_2
\leq
 \frac{(C_{\alpha} \omega_{\min} + \omega_{\max}) mn \lambda_U}{\kappa(1 - \alpha) \lVert \widetilde{\bV} \rVert^2_{1,2}}.
\end{equation}
\end{proposition}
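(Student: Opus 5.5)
The plan is a primal--dual witness (PDW) argument for the weighted group lasso \eqref{eq:glasso}. The first step is to construct an oracle solution supported on $\mathcal{A}^*_U$, and then to show two things: strict dual feasibility on the complement $\mathcal{A}^{*,\comp}_U$, and that no active group is shrunk to zero.

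\textbf{Structure of the design.} Write $\bX = \bX_U(\widetilde{\bV}) = \widetilde{\bV}\otimes \bI_m$. Then $\bX^\top\bX = (\widetilde{\bV}^\top\widetilde{\bV})\otimes \bI_m$, so the Gram matrix acts blockwise. The block for group pair $(i,l),(i',l')$ equals $\tilde{\bv}_l^\top\tilde{\bv}_{l'}\,\bI_{m_i}$ when $i=i'$ and is zero otherwise. Consequently, for each row group $i$ the problem couples only the $d$ columns through the $d\times d$ matrix $\bD = \widetilde{\bV}^\top\widetilde{\bV}$. By Assumption \ref{ass:local_U}.1--2, $\bD$ is diagonally dominant: its diagonal entries are at least $\kappa\|\widetilde{\bV}\|_{1,2}^2$, and the off-diagonal row sums over the at most $r$ active columns are at most $\alpha\min_l \tilde{\bv}_l^\top\tilde{\bv}_l$. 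This gives the following.
\begin{itemize}
\item Restricted to any active set, $\bD$ has smallest eigenvalue at least $(1-\alpha)\kappa\|\widetilde{\bV}\|_{1,2}^2$.
\item Its inverse is bounded in the $\ell_\infty$ block sense by a constant times $1/\{(1-\alpha)\kappa\|\widetilde{\bV}\|^2_{1,2}\}$.
\item The incoherence quantity $\|\bD_{S^\comp S}\bD_{SS}^{-1}\|_{\infty}$ is at most $\alpha/(1-\alpha)$. The mutual incoherence margin then becomes $1-\alpha/(1-\alpha) = (1-2\alpha)/(1-\alpha)$, which explains the factor $(1-2\alpha)$ in the lower bound.
\end{itemize}

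\textbf{Oracle solution and active-set error.} Solve the restricted problem on $S=\mathcal{A}^*_U$ and write its KKT condition as
\[
\bX_S^\top\bX_S(\hat\bbeta_S-\bbeta^*_S) = \bX_S^\top \bw - mn\lambda_U \bz_S,
\]
where $\bw = \vecop{\bE} + \{\bX_U(\bV_*\bP_{r,d})-\bX_U(\widetilde{\bV})\}\bbeta^*$ comes from \eqref{eq:error_decomp} and $\bz_S$ collects the weighted subgradients, with norm at most $\omega_{\max}$ per group. Invert using the blockwise bound on $\bD_{SS}^{-1}$. The penalty part then contributes the $(C_\alpha\omega_{\min}+\omega_{\max})mn\lambda_U/\{\kappa(1-\alpha)\|\widetilde{\bV}\|_{1,2}^2\}$ of \eqref{eq:glasso_error}, provided the lower bound on $\lambda_U$ ensures the noise term is dominated by a constant multiple of the penalty. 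The upper bound on $\lambda_U$ in \eqref{eq:tuning_range} makes this error strictly smaller than $c_{\beta_U}$, so every active group of $\hat\bbeta$ stays nonzero. This gives $\widehat{\mathcal{A}}_U\supseteq\mathcal{A}^*_U$ and also allows each $\bz_{i,l}$ to be taken as the normalized direction.

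\textbf{Strict dual feasibility.} For $(i,l)\notin S$, set
\[
\bz_{i,l} = \frac{\bX_{(i,l)}^\top\{\bw - \bX_S(\hat\bbeta_S-\bbeta^*_S)\}}{mn\lambda_U}
\]
and show $\|\bz_{i,l}\|_2<\omega_{il}^{(U)}$. Substituting the oracle error splits this into three pieces: an incoherence term, at most $(\alpha/(1-\alpha))\omega_{\max}$, and two projected-noise terms. This requires $\omega_{\max}\le\omega_{\min}$.

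\textbf{Bounding the two noise pieces.} The two noise pieces are the main obstacle.
\begin{itemize}
\item \emph{Stochastic piece.} The entry $\bX_{(i,l)}^\top\vecop{\bE} = \bE_{\mathcal{G}_i,:}\tilde{\bv}_l\in\mathbb{R}^{m_i}$ is a subgaussian vector with proxy $\sigma\|\tilde{\bv}_l\|_2\le\sigma\|\widetilde{\bV}\|_{1,2}$. A $1/2$-net argument on the sphere in $\mathbb{R}^{m_i}$, with cardinality at most $5^{M}$, combined with a union bound over the $pd$ groups, gives
\[
\max_{i,l}\|\bE_{\mathcal{G}_i,:}\tilde{\bv}_l\|_2\le \frac{2\sigma}{c_E}\|\widetilde{\bV}\|_{1,2}\Bigl(\sqrt{M\log 5}+\sqrt{\log(2pd/\eta)}\Bigr)
\]
with probability at least $1-\eta$.
\item \emph{Deterministic perturbation piece.} The term $\bX_{(i,l)}^\top\{\cdots\}\bbeta^*$ equals $\sum_{l'\le r}\tilde{\bv}_l^\top(\bv_{*l'}-\tilde{\bv}_{l'})\,(\bu_*)_{\mathcal{G}_i,l'}$. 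By Cauchy--Schwarz its norm is at most $r\,\|\widetilde{\bV}\|_{1,2}\,C_{\beta_U}\|\bV_*-\widetilde{\bV}\bP_{r,d}^\top\|_{1,2}$.
\end{itemize}
The oracle error feeds these same quantities back through $\bD_{SS}^{-1}$, which produces the $(1-\alpha)$ factors. Requiring the combined noise contribution to be below a third of the incoherence margin $(1-2\alpha)\omega_{\min}/(1-\alpha)$, hence the factor 3, yields exactly the lower bound in \eqref{eq:tuning_range}.

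\textbf{Conclusion.} With both conditions established, the restricted solution is the unique solution of \eqref{eq:glasso}, by strict feasibility and strict convexity on $S$. It has support exactly $\mathcal{A}^*_U$ and satisfies \eqref{eq:glasso_error}. The most delicate bookkeeping is tracking constants in the block-inverse bound, which is where $C_\alpha$ enters.
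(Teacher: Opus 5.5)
Your primal--dual witness argument is essentially the paper's proof. It uses the same oracle problem on $\mathcal{A}^*_U$, the block structure $\widetilde{\bV}^\top\widetilde{\bV}\otimes\bI$ with a diagonal-dominance bound on the restricted inverse (the paper's Lemma~\ref{lem:block_inverse}), and the incoherence term $\frac{\alpha}{1-\alpha}\omega_{\max}$. It also shares the $1/2$-net and union bound for the noise, the Cauchy--Schwarz bound for the design perturbation, and the factor $3$ from splitting the margin $\frac{1-2\alpha}{1-\alpha}\omega_{\min}$. The active-set error bound likewise gives both the upper limit on $\lambda_U$ and \eqref{eq:glasso_error}.

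One minor correction: bounding the projected noise by the unprojected noise plus its feedback through $\bD_{SS}^{-1}$ adds a factor $1/(1-\alpha)<2$ to the stochastic part of the lower bound, so it is not ``exactly'' the stated bound. The paper avoids this loss by noting that $\widehat{\bP}^{\perp}_{\mathcal{A}^*_U}$ is a contraction, so the subgaussian proxy stays at $\sigma\|\widetilde{\bV}\|_{1,2}$. Alternatively, you can absorb the factor into $c_E$.
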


In practice, the near-orthogonality condition in Assumption \ref{ass:local_U}.1 may be too strong to hold, and Proposition \ref{prop:one_step_U} shows that support recovery requires stronger effective signals and a smaller perturbation of the current factor estimate, as the admissible range of the corresponding tuning parameter narrows with increasing $\alpha$.
{\em Mutual incoherence} conditions of this form are natural in lasso and group lasso problems, and are required so that the penalty terms can identify the correct sparse solutions, even in the noiseless case \citep{wainwright2019high}.
In this case, we write an interpretable sufficient condition in terms of the columns of the factor matrix $\widetilde{\bV}$, rather than an opaque condition on the full design matrix $\bX_U(\widetilde{\bV})$.

The score domination condition motivates the tuning parameter in the model. The group design for the $\bU$-update has block operator norm proportional to $\|\widetilde{\bv}_l\|_2$. Thus, for subgaussian noise with design perturbation satisfying $\| \bV_* - \widetilde{\bV} \bP_{r,d}^{\tp} ~ \|_{1,2} \ll \sigma \sqrt{M} / r C_{\beta_U}$, a sufficient order is
\[
\lambda_{U}
\asymp
\frac{\|\widetilde{\bV}\|_{1,2}}{mn}
\{
\sqrt{M} + \sqrt{\log(pd)}\}
,
\]
which is the usual rate required for a group lasso problem with design $\bX_U(\widetilde{\bV})$.
Under a balanced factorization, $\|\widetilde{\bV}\|_{1,2}$ is of order $\sqrt{\sigma_1(\bS)}$, motivating the tuning rule in \eqref{eq:lambda_used}. 
This valid choice of $\lambda_U$ has a particularly weak dependence on $d$, implying that subject to some orthogonality conditions, the rank of the factorization can be correctly reduced in the presence of spurious columns of $\widetilde{\bV}$. 
There is a non-standard tuning regime when $\| \bV_* - \widetilde{\bV} \bP_{r,d}^{\tp} ~ \|_{1,2} \gg \sigma \sqrt{M} / r C_{\beta_U}$ and the lower bound in \eqref{eq:tuning_range} is dominated by the design perturbation.
In this case, as long as 
$
    \| \bV_* - \widetilde{\bV} \bP_{r,d}^{\tp} ~ \|_{1,2} \ll c_{\beta_U} \|\widetilde{\bV}\|_{1,2} / r C_{\beta_U}
$ with high probability,
correct selection is still possible, but requires a larger penalty to remove the effect of design misspecification.

Finally, note that Proposition~\ref{prop:one_step_U} depends on the column norms of the design factors $\widetilde{\bV}$ and the target factors $\bbeta^*$. In \eqref{eq:tuning_range}, if the column norms of $\widetilde{\bV}$ are artificially increased by unbalancing the factorization, the valid tuning range will not be affected due to the corresponding effect on $c_{\beta_U}$, as long as the tuning parameter is appropriately rescaled.
Similarly, although the estimation error in \eqref{eq:glasso_error} can be  artificially decreased by unbalancing the factorization, this will lead to a corresponding increase in the error for $\bV$-recovery in the next alternating problem and will not affect error rates for recovery of the entries of $\bS$.

\section{Simulation studies}\label{sec:sim}

\subsection{Simulation design} \label{subsec:sim_design}

Our simulation design, including module specification in the 3$\times$3 bidimensional structure, is illustrated in Figure \ref{fig:simdesign}. To reduce the computational cost in fitting BIDIFAC+, we set $p=q=3$ with $m_1=100, m_2=200,m_3=100$ and $n_1=150,n_2=200, n_3=250$, with 6 active modules in total. For module $k$, we first generated latent factor matrices $\bU_k\in\mathbb{R}^{m \times r}$ and $\bV_k\in \mathbb{R}^{n\times r}$ with independent $\mathcal{N}(0,1)$ entries and then set the entries outside the module support to zero. We compare settings with module ranks $r \in \{1,5\}$. The $k$th module was then constructed as $\bS^{(k)} = c \bU_k \bV_k^\top$, where $c > 0$ is a module scale parameter controlling signal strength. The entries of $\bE$ are independent $\mathcal{N}(0,1)$. We compared settings with $c\in\lbrace 1/8, 2/8, 3/8, 4/8\rbrace$ to control the signal-to-noise ratio. $c=1/8$ is the case where the top singular values of $\bY$ are barely higher than $\sqrt{n}+\sqrt{m}$, a probabilistic upper bound of $\sigma_1(\bE)$, while $c=4/8$ gave a clear elbow pattern in the singular values of $\bY$.

We compare GL-BIDIFAC+ to (i) BIDIFAC+ with $K=4$ (underspecified), $K=6$ (correctly specified), or $K$ unspecified (i.e., overspecified), where the $K=4$ and $K=6$ versions use a module-searching step during estimation, and (ii) MOFA+ with different pruning levels applied to each subcolumn of the estimated feature loadings and subject scores. The rank for MOFA+ was set to be the number of singular values exceeding $\sqrt{n}+\sqrt{m}$, which is the same as that used for GL-BIDIFAC+. Also, we set $\hat{\bw}_{il}=\bzero$ for MOFA+ when  $||\hat{\bw}_{il}||_2/\sqrt{m_i}<\texttt{eps}$ and $\hat{\bz}_{jl}=\bzero$ when  $||\hat{\bz}_{jl}||_2/\sqrt{n_j}<\texttt{eps}$ with $\texttt{eps}\in\lbrace 0.01, 0.05, 0.1, 0.5\rbrace$.

\subsection{Signal and module recovery}\label{sec:sim1}
We consider several metrics to evaluate the performance. Since our primary goal is to recover the module-specific variations, we consider (i) relative squared error (RSE) and (ii) relative false magnitude (RFM) defined as follows. 
\begin{align*}
\text{RSE}=\frac{\sum_{k\in  \mathcal{K} } ||\widehat{\bS}^{(k)}-\bS^{(k)}||_F^2}{\sum_{k\in\mathcal{K}} ||\bS^{(k)}||_F^2}\qquad \text{and} \qquad
\text{RFM}=\frac{\sum_{k\notin \mathcal{K}} ||\widehat{\bS}^{(k)}||_F^2}{mn},
\end{align*}
where $\mathcal{K}$ denotes the set of active modules.
For RFM, the denominator corresponds to the expected total variation of the white noise.

We also consider the number of true positive  and false positive modules defined as follows.
\begin{align*}
\text{TP}_{\text{module}}=\sum_{k\in {\mathcal{K}}} I(\widehat{\bS}^{(k)}\neq \bzero)\qquad \text{and} \qquad
\text{FP}_{\text{module}}=\sum_{k\notin {\mathcal{K}}} I(\widehat{\bS}^{(k)}\neq \bzero).
\end{align*}

\begin{figure}[h]
    \centering
    \includegraphics[width=0.9\linewidth]{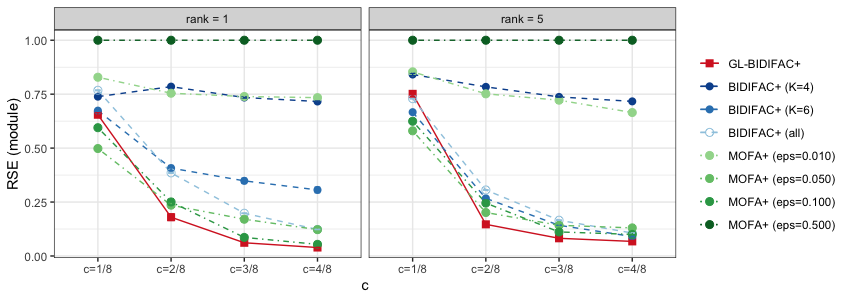}

    \includegraphics[width=0.9\linewidth]{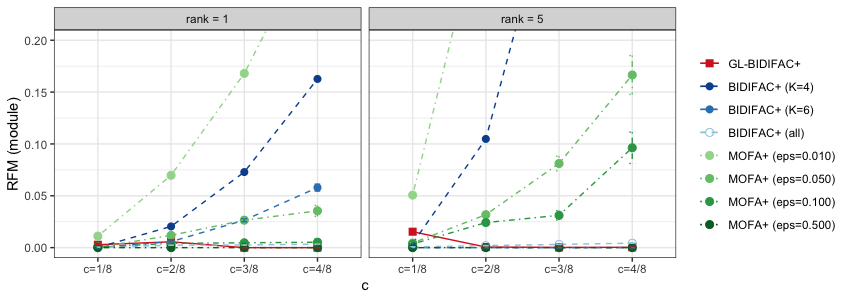}
    \caption{Summary of the simulation results for RSE and RFM.}
    \label{fig:RSE}
\end{figure}

\begin{figure}[h]
    \centering
    \includegraphics[width=0.9\linewidth]{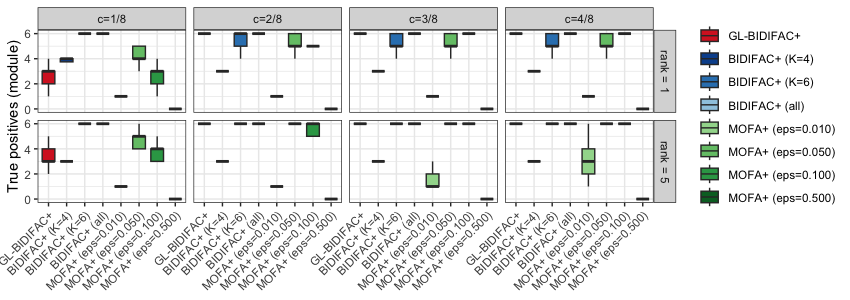}

    \includegraphics[width=0.9\linewidth]{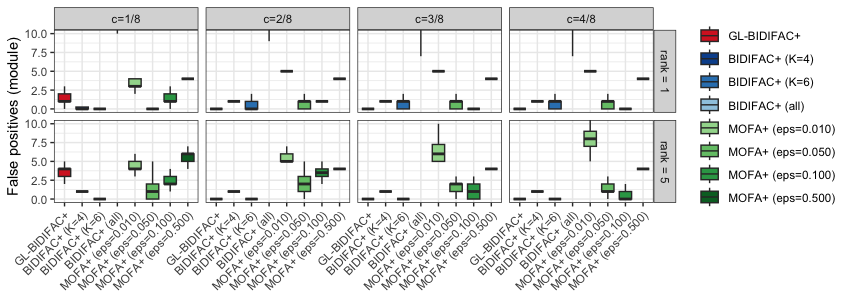}
    \caption{Summary of the simulation results for true/false positives. For BIDIFAC+ (all), the number of FP modules was usually above 10; the y-axis  is bounded by 10 for visualization.}
    \label{fig:TPFP}
\end{figure}

For each $c$ and $r$, the simulation was repeated 200 times, and the results are shown in Figures \ref{fig:RSE} and \ref{fig:TPFP}. Across all simulation settings, GL-BIDIFAC+ showed the most favorable overall performance. As the signal level increased from $c=1/8$ to $c=4/8$, its module-wise reconstruction error decreased rapidly and consistently for both rank settings, while maintaining high true-positive module recovery with very few false-positive modules. In particular, once the signal became moderate ($c \ge 2/8$), GL-BIDIFAC+ nearly recovered the full module structure in the rank-1 setting and remained clearly superior in the rank-5 setting as well. These gains were accompanied by near-zero relative false magnitude, indicating that the improved recovery was not driven by spurious extra modules but rather by more accurate identification of the true bidimensionally linked structure.

By contrast, BIDIFAC+ was substantially more sensitive to the choice of the maximum number of modules. When the model was underspecified ($K=4$), it systematically failed to recover all true modules, which resulted in larger reconstruction error and reduced true-positive counts. Increasing the allowed number of modules improved recovery, but the unrestricted version tended to admit more false structure, as reflected in elevated false-positive counts and larger relative false magnitude. Thus, while BIDIFAC+ could perform reasonably well when the module number was chosen correctly, its behavior was noticeably less stable than that of GL-BIDIFAC+, especially when the model complexity was misspecified.

MOFA+ exhibited a different pattern. Its performance depended strongly on the post hoc threshold \texttt{eps} used to define module supports from dense factor estimates. Smaller thresholds produced more aggressive support declarations, which sometimes improved true-positive recovery but also led to substantially larger numbers of false-positive modules and inflated false magnitude. Larger thresholds reduced false discoveries but often did so at the cost of worse reconstruction or missed modules. 

\subsection{Imputation performance}

Based on the previous simulation results, we compared GL-BIDIFAC+ to (i) BIDIFAC+ ($K=6$), (ii) BIDIFAC+ (all), (iii) MOFA+, as well as (iv) softImpute applied to $\bY$ (i.e., without considering the group structure) using the tuning parameter $\sqrt{n}+\sqrt{m}$ \citep{mazumder2010spectral}. We note that no post-pruning is needed for MOFA+ to impute missing values; therefore, we set $\texttt{eps}=0$, which is the default in the \texttt{MOFA2} R package.

Using the simulation design described in Section~\ref{subsec:sim_design}, we considered two scenarios for entry missingness. The first scenario had 10,000 entries (4.17\%) of $\bY$ missing at random. The second scenario had 5\% of missing rate. Following \citet{park2020integrative}, we used
\begin{align*}
\text{ImputeErr}
=
\frac{
\sum_{(r,s)\in \Omega^{\comp}}
| \bS[r,s] - \widehat{\bS}[r,s] |^2
}{
\sum_{(r,s)\in \Omega^{\comp}}
\left| \bS[r,s] \right|^2
}
\end{align*}
to evaluate the  performance,  where $\Omega^{\comp}$ is the set of indices of the missing values. 

\begin{figure}[t]
    \centering
    \includegraphics[width=0.9\linewidth]{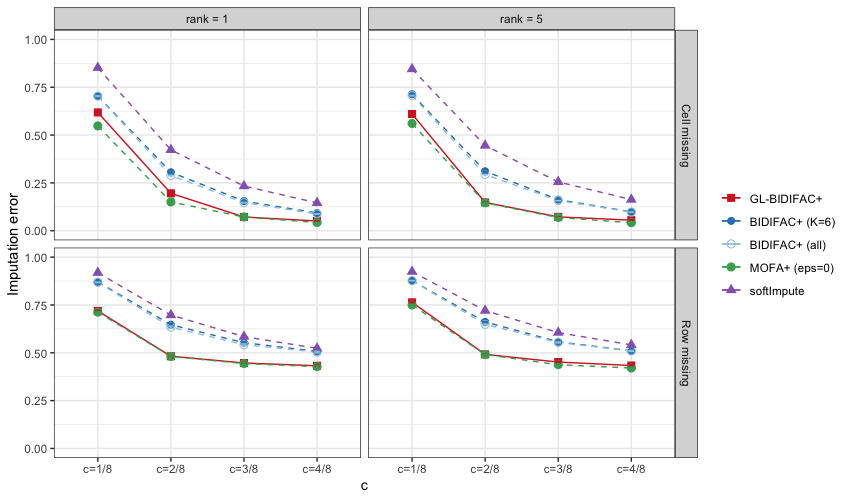}
    \caption{Summary of the imputation performance of GL-BIDIFAC+ and competitors.}
    \label{fig:imp}
\end{figure}

For each setting of $c$, $r$, and missingness structure, the simulation was repeated 200 times. Figure \ref{fig:imp} summarizes the imputation performance.  Across both missingness mechanisms, GL-BIDIFAC+ achieved imputation accuracy comparable to MOFA+ and consistently improved over BIDIFAC+ and softImpute. The advantage over softImpute indicates that exploiting the bidimensional group structure provides benefits beyond generic low-rank matrix completion, and the advantage over BIDIFAC+ can be partially attributed to (i) the difference between the nuclear norm penalty and the Schatten-$1/2$ penalty  (Proposition \ref{prop:shrinking}) and (ii) adaptive weighting in GL-BIDIFAC+ that BIDIFAC+ does not incorporate.  The slightly better performance of MOFA+ over GL-BIDIFAC+ suggests that further fine-tuning of GL-BIDIFAC+ may be needed; at the same time, it underscores that the tuning parameter selection in GL-BIDIFAC+ is designed based on support and module recovery in the asymptotic regime, not necessarily on global signal recovery.  In addition, from a probabilistic viewpoint, the spike-and-slab prior with ARD in MOFA+ may align more directly with the data-generating model than the multivariate Laplace prior implied by a Bayesian interpretation of the GL-BIDIFAC+ penalty term.  

\section{Analysis of TCGA data}\label{sec:data}

TCGA’s pan-cancer data have been extensively analyzed, and many of the dominant cancer-specific subtype, tissue-origin, and histology-related patterns are well documented \citep{weinstein2013cancer}. This makes the data useful for evaluating whether an unsupervised module-discovery method can recover known molecular structure while also providing a coherent decomposition into global, cancer-specific, and omics-restricted components.

\subsection{Preprocessing}

In our analysis, we considered four omics platforms available across all cancer types. Following \citet{lock2022bidimensional}, we used four omics sources: batch-corrected RNA-seq mRNA expression, batch-corrected miRNA-seq expression, between-platform normalized DNA methylation from the Illumina 27K and 450K arrays, and batch-corrected reverse-phase protein array data. RNA-Seq and miRNA-Seq counts were log-transformed $(\log(1+x))$, and each genomic feature was centered within cancer type to remove baseline differences across cancer types, and features were filtered to the 1000 genes and 1000 methylation CpG sites with the largest standard deviations after centering, leaving $m_1=1000$ genes, $m_2=743$ miRNAs, $m_3=1000$ CpGs, and $m_4=198$ protein features. In our analysis, we excluded cancer types with fewer than 100 tumors to avoid unstable estimation driven by very small sample sizes, which resulted in 21 cancer types. The number of samples in each tumor type varied between 115 (pancreatic adenocarcinoma; PAAD) and 848 (breast invasive adenocarcinoma; BRCA). Abbreviations of 21 cancers are summarized in the Supplementary Material.

\subsection{Findings}

The complete GL-BIDIFAC+ model fit required approximately 8 hours of wall-clock time on a high-performance computing node. It yielded 302 modules in total, with the ranks totaling 530.  This decomposition contained more modules than the BIDIFAC+ analysis in \citet{lock2022bidimensional}, but it remained  parsimonious at the module level, with most estimated modules having rank one. Our estimated signals  showed substantial rank reduction, as
$\text{rank}(\widehat{\bS}_{ij})<\text{rank}(\bY_{ij})$
for all $i,j$ and
$\text{rank}(\widehat{\bS}^{(k)})<
\min_{i,j}\{\text{rank}(\bY_{ij})\}$ for all $k$.
Also, the pairwise principal angles between  modules sharing at least one subblock were  close to $90^\circ$ in both the loading (median: 88.30$^\circ$) and score (median: 88.98$^\circ$) spaces, and only 1 pair from loading and 6 pairs from the score spaces had angles less than 45$^\circ$ (minimum: 31.57$^\circ$). 

The support, rank, and sum of squares of each module are  availabale at \url{https://github.com/junjypark/GLBP_repository} (\texttt{module\_summary\_table.xlsx}). Modules were numbered based on their sum of squares. To avoid treating the 302 modules as a list of unrelated findings, we categorize the modules by their support patterns, and make several interpretable findings which we enumerate below.

First, the modules explaining the largest amount of variation were predominantly pan-cancer or \textit{near}-pan-cancer modules active across all retained cancer types, e.g., miRNA only (module 1), protein only (module 2), mRNA + miRNA (module 3), mRNA only (module 4), mRNA + methylation (module 5), methylation only (module 6), and mRNA + miRNA + methylation (module 7). This is expected because these modules accumulate variation over the largest submatrices and capture broad pan-cancer molecular variation, together with platform-wide covariance patterns. Compared with the BIDIFAC+ analysis, in which broad pan-cancer variation was decomposed primarily into all-omics, miRNA-only, and mRNA-only modules, GL-BIDIFAC+ produced a sharper separation of shared omics structures.

\begin{figure}
    \centering
    \includegraphics[width=0.9\linewidth]{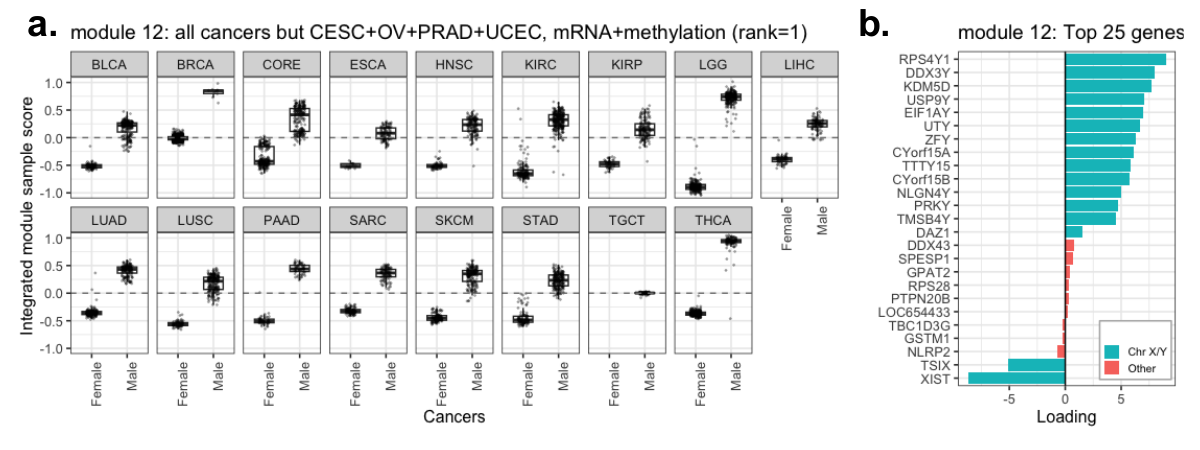}
    \caption{Summary of module 12.
(a) Distribution of  subject scores by sex within each cancer type. 
(b) The 25 genes with the largest absolute mRNA loadings.
}
    \label{fig:sex}
\end{figure}

Second, among several near-pan-cancer modules, module 12 had  all cancers except for four sex-specific cancers (CESC, OV, PRAD, and UCEC) across mRNA + methylation, and it captured a strong sex-associated variation. Among cancer types represented by both sexes (i.e., except for TGCT that showed tiny variation), scores consistently separated female and male samples (Figure \ref{fig:sex}(a)). This interpretation was supported by the mRNA loadings, where the largest positive loadings were dominated by genes on chromosomes X/Y (Figure \ref{fig:sex}(b)). This module closely parallels the sex-associated module identified by BIDIFAC+, but isolates the corresponding variation within the mRNA + methylation structure.
Several near-pan-cancer modules (e.g., modules 9, 14, 16, 17, 19, 28, and 43) often excluded  esophageal carcinoma (ESCA), pancreatic adenocarcinoma (PAAD), and testicular germ cell tumors (TGCT). Although these exclusions should be interpreted carefully, such a pattern is consistent with features of these cancer types. For example, TGCT is a germ-cell tumor rather than an epithelial carcinoma; ESCA combines molecularly distinct adenocarcinoma and squamous histologies under a single cancer-type label \citep{cancer2017integrated}; and PAAD often has low neoplastic cellularity, so bulk molecular profiles may contain substantial stromal or desmoplastic contributions that weaken their alignment with broad tumor-intrinsic pan-cancer variation \citep{raphael2017integrated}.

\begin{figure}[t]
    \centering
    \includegraphics[width=0.9\linewidth]{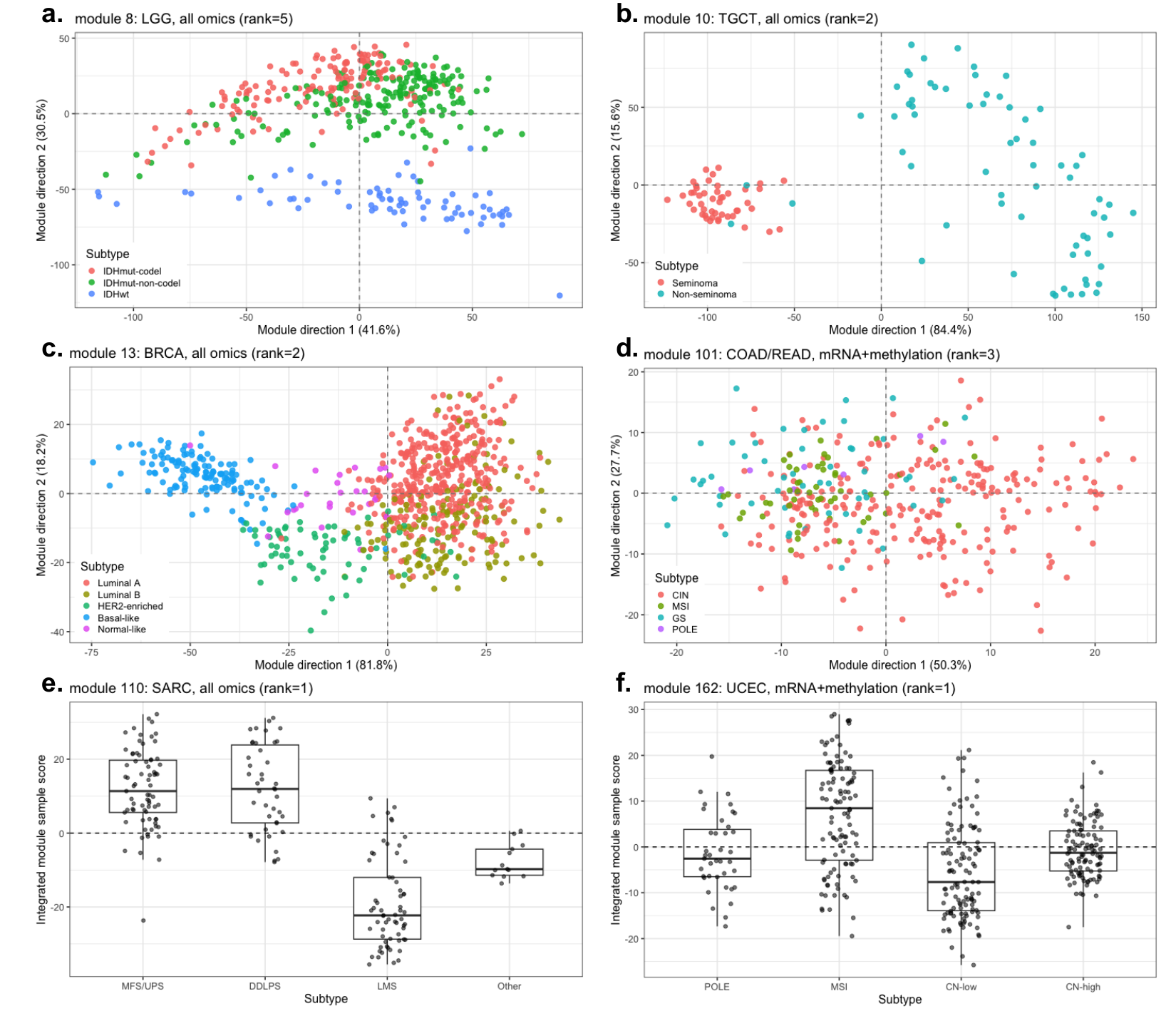}
    \caption{
    Visualization of six modules  that are specific to one cancer type and explained by established molecular or histologic subtypes from TCGA.  
    }
    \label{fig:subtype}
\end{figure}

Third, there were a number of cancer-specific modules including low-grade glioma (LGG), TGCT, BRCA, thyroid carcinoma (THCA), PAAD, kidney renal clear cell carcinoma (KIRC), sarcoma (SARC), and others. Among these, we focused on cancers where molecular or histologic subtypes identified by TCGA, and examined whether variabilities are explained by known subtypes. The visualizations of these modules are provided in Figure \ref{fig:subtype}. Module 8 for LGG across all omics recovered the major IDH/1p19q-defined subtype structure, separating IDH-wildtype, IDH-mutant/non-codeleted, and IDH-mutant/1p19q-codeleted tumors (Figure \ref{fig:subtype}(a)) \citep{cancer2015comprehensive}. Module 10 for TGCT across all omics clearly separated seminoma from non-seminoma, consistent with the heterogeneity of this group (Figure \ref{fig:subtype}(b)) \citep{shen2018integrated}. Module 13 for BRCA across all omics showed a clear separation of  basal-like tumors from the other BRCA subtypes, while the remaining subtypes exhibited greater overlap in the two-dimensional module-score space (Figure \ref{fig:subtype}(c)) \citep{CancerGenomeAtlasNetwork2012}.  Module 101 for COAD/READ across mRNA and methylation showed partial organization of the CIN, MSI, GS, and POLE molecular classes, with the clearest contrast occurring between GS tumors and the subset of CIN tumors extending toward positive scores along the first module direction (Figure \ref{fig:subtype}(d)). Rank-one modules also captured interpretable subtype contrasts. Module 162 for uterine corpus endometrial carcinoma (UCEC) across mRNA and methylation captured an epigenetic axis that most clearly distinguished MSI tumors from copy-number-low tumors, while POLE and copy-number-high tumors occupied intermediate score ranges (Figure \ref{fig:subtype}(f)) \citep{tcga2013}. Module 110 for SARC across all omics distinguished leiomyosarcoma from MFS/UPS and dedifferentiated liposarcoma, with the leiomyosarcoma samples exhibiting markedly lower module scores (Figure \ref{fig:subtype}(e)) \citep{lazar2017comprehensive}. In the BIDIFAC+ analysis, subtype-associated variation was also observed in BRCA and UCEC; however, LGG subtype variation was primarily captured by a methylation-only module, and TGCT-specific subtype structure was not identified.

Fourth, GL-BIDIFAC+ identified  modules shared across groups of biologically related cancers. These modules often reflected similarity in tissue of origin or histology. For example, modules 33, 69, 75 and  100 captured variation common to kidney cancers (KIRC and kidney renal papillary cell carcinoma [KIRP]). Similarly, modules 164  and 173  suggested gastrointestinal variations  (COAD/READ and stomach adenocarcinoma [STAD]). Modules 77 and 134  represented  repeatedly identified  BRCA and prostate adenocarcinoma (PRAD) that may reflect hormone-responsive epithelial variation. Module 47 (ovarian serous cystadenocarcinoma [OV] + UCEC,
mRNA + miRNA + methylation) identified  shared variations across  gynecologic cancers arising in the female reproductive tract. Its latent directions were predominantly driven by UCEC copy-number-high tumors and ovarian tumors, consistent with existing literature on a  relationship between serous-like endometrial and ovarian molecular variation \citep{tcga2013}.

\begin{figure}[!]
    \centering
    \includegraphics[width=0.9\linewidth]{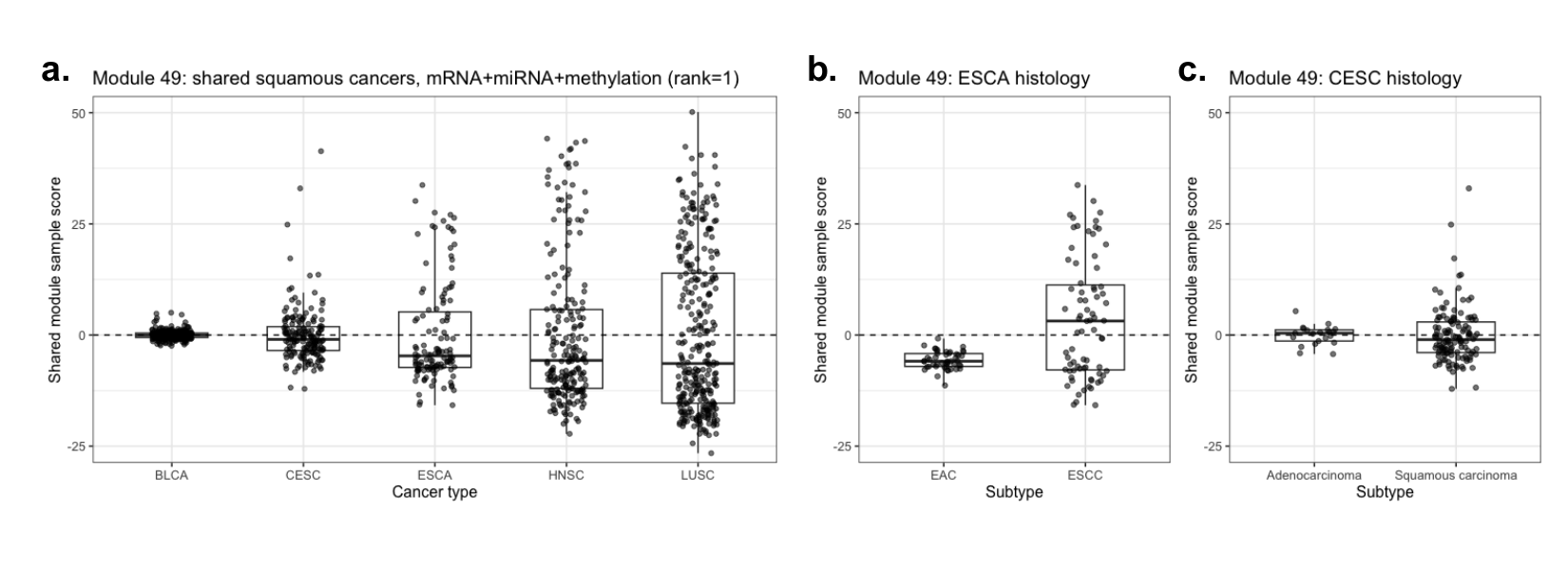}
    \caption{
    Summary of module 49.
    (a) Distribution of subject scores by the  cancer types.
    (b) Scores for esophageal adenocarcinoma (EAC) and esophageal squamous cell carcinoma (ESCC).
    (c) Scores for cervical and cervical squamous cell adenocarcinomas.    }
    \label{fig:epithelial}
\end{figure}

Fifth, module 49 (BLCA, CESC, ESCA, HNSC, and LUSC across mRNA + miRNA + methylation) captured cross-tissue variation associated with squamous differentiation program. From Figure \ref{fig:epithelial}(a), HNSC and LUSC, defined predominantly by squamous histology, showed higher variability than the others that consist of adenocarcinomas and squamous carcinomas. We therefore used the within-cancer histologic contrasts available in ESCA and CESC to assess whether variation of the module  was associated with squamous tumors. In ESCA, ESCC samples exhibited substantially greater variation (Figure \ref{fig:epithelial}(b)), and cervical squamous carcinomas showed considerably greater variation than adenocarcinomas (Figure \ref{fig:epithelial}(c)). Among the top mRNA loadings were genes known to be involved in squamous epithelial/keratinocyte differentiation, including KRT4, KRT13/14/17, SPRR1B, SPRR2A, SPRR2D, SPRR3, IVL, SCEL, PADI1, PADI3, and S100A8/S100A9. 

Taken all together, the analysis suggests that GL-BIDIFAC+ can distinguish broad pan-cancer variation from more localized biological signals. This separation is useful for interpretation, where large global modules summarize common molecular variability across the pan-cancer cohort, cancer-specific modules recover established molecular and histologic subtypes, and restricted modules reveal structured variation shared by select subsets of cancers.

\section{Discussion}\label{sec:discussion}

We have proposed GL-BIDIFAC+, a group-regularized matrix factorization method for decomposing bidimensionally linked matrices into low-rank modules. Compared to the existing methods, our method (i) significantly enhances computational scalability compared to BIDIFAC+, (ii) does not rely on post-hoc pruning (which might be arbitrary for MOFA+) or \textit{a priori} choice of the maximum number of modules (which might be arbitrary for BIDIFAC+), (iii) improves true and false positive rates for module recovery; and (iv) allows for model-based imputation. When GL-BIDIFAC+ was applied to the TCGA data, the recovered modules included broad pan-cancer modules, cancer-specific subtype modules, and modules shared by selected cancer types that are consistent with established TCGA biology.

GL-BIDIFAC+ is  suited for modeling with the Gaussian likelihood, and extensions to binary or count data would require more advanced modeling strategies such as those of \citet{li2018general}. We note, however, that existing methods that support modeling binary/count data (e.g., MOFA+) acknowledge that estimation may be less accurate than under a Gaussian model. Similarly, we also recommend normalization and variance-stabilizing transformations whenever possible followed by Gaussian modeling.

\section*{Software}

An R package for  GL-BIDIFAC+ is available at 
\ifblind
[blinded].
\else
\url{https://github.com/junjypark/GLBP}.
\fi

\section*{Disclosure statement}

The authors report there are no competing interests to declare. 

\ifblind
\else
\section*{Funding}

This work was supported by the McLaughlin Centre's Accelerator Grant (JYP) and the Natural Sciences and Engineering Research Council of Canada (RGPIN-2022-04831 for JYP and RGPIN-2025-02892 for PWM). The computing resources were enabled in part by support provided by University of Toronto and the Digital Research Alliance of Canada.
\fi

\section*{Data availability statement}

All data used in Section \ref{sec:data} are publicly available from the Cancer Genome Atlas. Codes to preprocess the data are provided at 
\ifblind
[blinded].
\else
\url{https://github.com/junjypark/GLBP_repository}.
\fi

\bibliographystyle{apalike}
{
\bibliography{references}
}

\newpage

\begin{center}
    {\LARGE \textbf{Supplementary materials}}
\end{center}
\appendix

\renewcommand{\thefigure}{S\arabic{figure}}
\setcounter{figure}{0}

\section{Additional information on Section \ref{sec:data}}

\subsection{Acronyms of the cancers}

\begin{table}[htbp]
    \centering
    \caption{Acronyms of the 21 cancer types included in the TCGA analysis.}
    \label{tab:cancer_abbreviations}
    \small
    \begin{tabular}{ll}
        \toprule
        Abbreviation & Cancer type \\
        \midrule
        BLCA & Bladder urothelial carcinoma \\
        BRCA & Breast invasive carcinoma \\
        CESC & Cervical carcinoma \\
        CORE (COAD/READ) & Colorectal adenocarcinoma \\
        ESCA & Esophageal carcinoma \\
        HNSC & Head and neck squamous cell carcinoma \\
        KIRC & Kidney renal clear cell carcinoma \\
        KIRP & Kidney renal papillary cell carcinoma \\
        LGG & Brain lower-grade glioma \\
        LIHC & Liver hepatocellular carcinoma \\
        LUAD & Lung adenocarcinoma \\
        LUSC & Lung squamous cell carcinoma \\
        OV & Ovarian serous cystadenocarcinoma \\
        PAAD & Pancreatic adenocarcinoma \\
        PRAD & Prostate adenocarcinoma \\
        SARC & Sarcoma \\
        SKCM & Skin cutaneous melanoma \\
        STAD & Stomach adenocarcinoma \\
        TGCT & Testicular germ cell tumors \\
        THCA & Thyroid carcinoma \\
        UCEC & Uterine corpus endometrial carcinoma \\
        \bottomrule
    \end{tabular}
\end{table}

\newpage

\subsection{Summary of the ranks of the signals}

\begin{table}[H]
    \centering
    \caption{
    Sum of the ranks of the estimated modules contributing to each
    cancer-by-omics submatrix. Each entry is the sum of the ranks of
    all modules whose support includes both the corresponding cancer
    type and omics platform. 
    }
    \label{tab:subblock_rank_sums}
    \small
    \setlength{\tabcolsep}{5pt}
    \begin{tabular}{lrrrr}
        \toprule
        Cancer type
        & mRNA
        & miRNA
        & Methylation
        & Protein \\
        & ($p_1=1000$)
        & ($p_2=743$)
        & ($p_3=1000$)
        & ($p_4=198$) \\
        \midrule
        BLCA ($n_1=338$)   & 85  & 145 & 57 & 78 \\
        BRCA ($n_2=849$)   & 126 & 163 & 84 & 84 \\
        CESC ($n_3=170$)   & 64  & 103 & 38 & 61 \\
        CORE ($n_4=447$)   & 95  & 138 & 57 & 76 \\
        ESCA ($n_5=123$)   & 59  & 75  & 38 & 46 \\
        HNSC ($n_6=210$)   & 64  & 107 & 45 & 66 \\
        KIRC ($n_7=446$)   & 90  & 139 & 52 & 79 \\
        KIRP ($n_8=210$)   & 66  & 122 & 37 & 79 \\
        LGG ($n_9=425$)    & 93  & 151 & 51 & 87 \\
        LIHC ($n_{10}=175$) & 51 & 105 & 37 & 77 \\
        LUAD ($n_{11}=356$) & 84 & 145 & 55 & 80 \\
        LUSC ($n_{12}=304$) & 81 & 134 & 54 & 73 \\
        OV ($n_{13}=238$)   & 70 & 127 & 41 & 71 \\
        PAAD ($n_{14}=115$) & 53 & 70  & 33 & 50 \\
        PRAD ($n_{15}=347$) & 91 & 138 & 57 & 82 \\
        SARC ($n_{16}=217$) & 56 & 117 & 36 & 76 \\
        SKCM ($n_{17}=337$) & 80 & 143 & 58 & 75 \\
        STAD ($n_{18}=333$) & 85 & 138 & 56 & 77 \\
        TGCT ($n_{19}=118$) & 55 & 81  & 44 & 52 \\
        THCA ($n_{20}=370$) & 86 & 139 & 30 & 84 \\
        UCEC ($n_{21}=417$) & 86 & 147 & 57 & 79 \\
        \bottomrule
    \end{tabular}
\end{table}

\newpage
\section{Technical proofs}

\subsection{Proof of Proposition \ref{prop:ident}}

\begin{proof}
Fix a pair $\mathcal{A} \subseteq [p]$ and $\mathcal{B} \subseteq [q]$.
Note that for any $\mathcal{A}$ the set of row (resp.~column) modules decomposes as the disjoint union
$$
    \{\mathcal{A}' : \mathcal{A}' \subseteq \mathcal{A}\} \cup \{\mathcal{A}'' : \mathcal{A}'' \cap \mathcal{A}^{\comp} \neq \emptyset \},
$$
denote these sets by $x(\mathcal{A})$ and $y(\mathcal{A})$, and use analogous notation for $x(\mathcal{B})$ and $y(\mathcal{B})$. 
Thus, by \eqref{ident_cond}
\begin{align}
    &\operatorname{rank}(\bS) - \operatorname{rank}(\bS_{\mathcal{A}^{\comp} [q]}) - \operatorname{rank}(\bS_{[p]\mathcal{B}^{\comp}}) + \operatorname{rank}(\bS_{\mathcal{A}^{\comp}\mathcal{B}^{\comp}}) \nonumber \\
    = &\sum_{\mathcal{A}'}\sum_{\mathcal{B}'} r(\mathcal{A}',\mathcal{B}') - \sum_{\mathcal{A}' \in y(\mathcal{A})} \sum_{\mathcal{B}'} r(\mathcal{A}',\mathcal{B}') - \sum_{\mathcal{A}'} \sum_{\mathcal{B}' \in y(\mathcal{B})} r(\mathcal{A}',\mathcal{B}') + \sum_{\mathcal{A}' \in y(\mathcal{A})} \sum_{\mathcal{B}' \in y(\mathcal{B})} r(\mathcal{A}',\mathcal{B}') \nonumber \\
    = &\sum_{\mathcal{A}' \in x(\mathcal{A})} \sum_{\mathcal{B}' \in x(\mathcal{B})} r(\mathcal{A}',\mathcal{B}') \label{ident_claim}
\end{align}

Consider a partial order on modules such that $(\mathcal{A}',\mathcal{B}')$ is smaller than $(\mathcal{A},\mathcal{B})$ when $\mathcal{A}' \subseteq \mathcal{A}$ and $\mathcal{B}' \subseteq \mathcal{B}$.
The singleton modules with $\mathcal{A} = \{i_*\}$ and $\mathcal{B} = \{j_*\}$, are minimal elements as they contain no other modules. 
Thus, by \eqref{ident_claim}
$$
    r(\{i_*\},\{j_*\}) = \operatorname{rank}(\bS) - \operatorname{rank}(\bS_{\{i_*\}^{\comp} [q]}) - \operatorname{rank}(\bS_{[p]\{j_*\}^{\comp}}) + \operatorname{rank}(\bS_{\{i_*\}^{\comp}\{j_*\}^{\comp}})
$$ 
can be recovered uniquely from the ranks of the matrix blocks.
The same is true for larger modules, where $r(\mathcal{A},\mathcal{B})$ is recovered recursively from \eqref{ident_claim}, based on the ranks of the matrix blocks, and the module ranks $r(\mathcal{A}',\mathcal{B}')$ for all smaller modules.
\end{proof}

\subsection{Proof of Proposition \ref{prop1}}

\begin{proof}
Suppose without loss of generality that $\widehat{\bu}_{:,l}=\bzero$. We show that
$\widehat{\bv}_{:,l}=\bzero$.  Suppose, to the contrary, that
$\widehat{\bv}_{:,l}\neq\bzero$, and, for $t\in[0,1]$, define
\[
\bV(t)
=
\widehat{\bV}
-
t\widehat{\bv}_{:,l}\be_l^\top,
\]
where $\be_l$ is the $l$th standard basis vector in $\mathbb{R}^d$.
Because $\widehat{\bu}_{:,l}=\bzero$,
\[
\widehat{\bU}\bV(t)^\top
=
\widehat{\bU}\widehat{\bV}^\top
-
t\widehat{\bu}_{:,l}\widehat{\bv}_{:,l}^\top
=
\widehat{\bU}\widehat{\bV}^\top.
\]
Thus, replacing $\widehat{\bv}_{:,l}$ by
$(1-t)\widehat{\bv}_{:,l}$ does not change the loss function or the
penalty on $\bU$. On the other hand, the penalty on $\bV$ satisfies
\[
R\{\bV(t)\}
=
R(\widehat{\bV})
-
t\sum_{j=1}^q
\omega^{(V)}_{j,l}
\left\|\widehat{\bv}_{H_j,l}\right\|_2.
\]
Since the groups $\{H_1,\ldots,H_q\}$ form a partition,
$\widehat{\bv}_{:,l}\neq\bzero$ implies
\[
\sum_{j=1}^q
\omega^{(V)}_{j,l}
\left\|\widehat{\bv}_{H_j,l}\right\|_2
>0.
\]
Therefore, the objective strictly decreases for every sufficiently
small $t>0$, contradicting stationarity of
$(\widehat{\bU},\widehat{\bV})$. Hence
$\widehat{\bv}_{:,l}=\bzero$.
The proof of the reverse implication is identical.

The second statement follows immediately because the outer product of
two vectors is nonzero if and only if both vectors are nonzero.
\end{proof}

\subsection{Proof of Proposition \ref{prop:one_step_U}}

\begin{proof}
    The optimality conditions for \eqref{eq:glasso} are
$$
    \bzero = \XV^{\tp}\{\XV \widehat{\bbeta}^U - \by \} + (mn) \lambda_U \hat{\bz},
$$
where $\hat{\bz}\in\mathbb R^{md}$ is a vector of subgradients satisfying
\[
\hat{\bz}_{i,l}=
\begin{cases}
\omega_{i,l}^{(U)} \widehat{\bbeta}^U_{i,l}/\|\widehat{\bbeta}^U_{i,l}\|_2,
& \widehat{\bbeta}^U_{i,l}\ne 0,\\
\|\hat{\bz}_{i,l}\|_2\le \omega_{i,l}^{(U)},
& \widehat{\bbeta}^U_{i,l}=0.
\end{cases}
\]
Strict dual feasibility for inactive groups will be verified below.

Denote $\bV'_* = \bV_* \bP_{r,d}$. By \eqref{eq:error_decomp},
\begin{align*}
    \bzero = &\XV^{\tp} \XV \{ \hat{\bbeta}^U - \bbeta^* \} - \XV^{\tp} \left(\{\XVstar - \XV\} \bbeta^* + \vecz(\bE)\right) + (mn) \lambda_U \hat{\bz}.
\end{align*}
Partition the columns of $\XV$ according to the active and inactive groups:
$$
    \XVstar = [
        \XAstar, \XAcstar
    ], \quad \XV = [
        \XA, \XAc]. 
$$
Note that by construction, the spurious columns in $\widetilde{\bV}$ will all correspond to inactive groups in $\bbeta^*$.
Denote
$$
    \hat{\bm{\xi}} = \{\XVstar - \XV\} \bbeta^* + \vecz(\bE) = (\XAstar - \XA) \bbeta^*_{\mathcal{A}^*_U} + \vecz(\bE).
$$
For notational simplicity, we will also use $\widehat{\bX}_{i,l}$ to denote the columns of $\XV$ corresponding to group $(i,l)$. 
Using the Kronecker structure of $\XV$, note that
\begin{equation} \label{grams}
    \widehat{\bX}_{i,l}^{\tp} \widehat{\bX}_{i',l'} = \begin{cases}
        (\bv_{l}^{\tp} \bv_{l'}) \bI_{m_i}, \quad &i = i' \\
        \bzero, \quad &i \neq i'.
    \end{cases}
\end{equation}
Suppose that 
$$
    \hat{\bbeta}_{\mathcal{A}^*_U} =
\argmin_{\bbeta}~\left\{
\frac{1}{2}
||
\by -\XA \bbeta
||_2^2
+
(mn) \lambda_U
\sum_{(i,l) \in \mathcal{A}^*_U}
\omega^{(U)}_{il} 
||
\bbeta_{i,l}
||_2,
\right\}$$
which satisfies an optimality condition
\begin{equation} \label{eq:optimality_sub}
    \bzero = \XA^{\tp} \XA ( \widehat{\bbeta}_{\mathcal{A}^*_U} - \bbeta^*_{\mathcal{A}^*_U}) - \XA^{\tp} \widehat{\bm{\xi}} + (mn) \lambda_U \hat{\bz}_{\mathcal{A}^*_U}.
\end{equation}
Since $\XA$ has full rank,
\begin{equation} \label{eq:optimality_beta}
    \widehat{\bbeta}_{\mathcal{A}^*_U} - \bbeta^*_{\mathcal{A}^*_U} = ( \XA^{\tp} \XA )^{-1} ( \XA^{\tp} \widehat{\bm{\xi}} - (mn) \lambda_U \hat{\bz}_{\mathcal{A}^*_U} ).
\end{equation}
Define the vector
$$
    \hat{\bz}_{\mathcal{A}^{\comp}_U} = - \frac{1}{mn \lambda_U} \XAc^{\tp} \XA (\widehat{\bbeta}_{\mathcal{A}^*_U} - \bbeta^*_{\mathcal{A}^*_U}) + \frac{1}{mn\lambda_U} \XAc^{\tp} \widehat{\bm{\xi}}.
$$
It is straightforward to verify that with the subgradient vector $(\hat{\bz}_{\mathcal{A}^*_U}, \hat{\bz}_{\mathcal{A}^{\comp}_U})^{\tp}$, $(\widehat{\bbeta}_{\mathcal{A}^*_U}, \bzero)^{\tp}$ satisfies the optimality conditions of the original group lasso problem \eqref{eq:glasso}.
Thus a sufficient condition for $\widehat{\mathcal{A}}_U \subseteq \mathcal{A}_U^*$ is
\begin{equation} \label{subgrad_min}
    \max_{(i,l) \in \mathcal{A}_U^{*,\comp}} \| \hat{\bz}_{i,l} \|_2 < \omega_{\min},
\end{equation}
since \eqref{subgrad_min} implies that $\| \hat{\bz}_{i,l} \|_2 < \omega_{i,l}^{(U)}$ for all $(i,l) \notin \mathcal{A}^*_U$.
\begin{align*}
    \hat{\bz}_{i,l} =& - \frac{1}{mn \lambda_U} \hbX_{i,l}^{\tp} \XA (\hat{\bbeta}_{\mathcal{A}^*_U} - \bbeta^*_{\mathcal{A}^*_U}) + \frac{1}{mn\lambda_U} \hbX_{i,l}^{\tp} \hat{\bm{\xi}} \\
    =& - \frac{1}{mn \lambda_U} \hbX_{i,l}^{\tp} \XA ( \XA^{\tp} \XA )^{-1} ( \XA^{\tp} \hat{\bm{\xi}} - (mn) \lambda_U \hat{\bz}_{\mathcal{A}^*_U} ) + \frac{1}{mn\lambda_U} \hbX_{i,l}^{\tp} \hat{\bm{\xi}} \\
    =& \hbX_{i,l}^{\tp} \XA ( \XA^{\tp} \XA )^{-1} \hat{\bz}_{\mathcal{A}^*_U} + \frac{1}{mn\lambda_U} \hbX_{i,l}^{\tp} \widehat{\bP}^{\perp}_{\mathcal{A}^*_U} \hat{\bm{\xi}} \\
    =& \underbrace{\hbX_{i,l}^{\tp} \XA ( \XA^{\tp} \XA )^{-1} \hat{\bz}_{\mathcal{A}^*_U}}_{\text{(I)}} \\
    &+ \underbrace{\frac{1}{mn\lambda_U} \hbX_{i,l}^{\tp} \widehat{\bP}^{\perp}_{\mathcal{A}^*_U} \vecz(\bE)}_{\text{(II)}} \\
    &+ \underbrace{\frac{1}{mn\lambda_U} \hbX_{i,l}^{\tp} (\XAstar - \XA) \bbeta^*_{\mathcal{A}^*_U}}_{\text{(III)}} \\
    &- \underbrace{\frac{1}{mn\lambda_U} \hbX_{i,l}^{\tp} \XA ( \XA^{\tp} \XA )^{-1} \XA^{\tp} (\XAstar - \XA) \bbeta^*_{\mathcal{A}^*_U}}_{\text{(IV)}},
\end{align*}
where $\widehat{\bP}^{\perp}_{\mathcal{A}^*_U}$ is the projection operator onto the orthogonal complement of $\operatorname{col}(\XA)$.

\paragraph{Term (I)}
 Using \eqref{grams}, we see that $\hbX_{i,l}^{\tp} \XA=\bzero$ unless the active group is $(i,l')$ for some $l'$, and $\XA^{\tp} \XA$ is block diagonal across different $i$. Denote
$$
    \mathcal{A}^*_i = \{(i,l') \in \mathcal{A}^*_U : l'\in [r]\},
$$
and $\bP_{i,l}$ the projection operator onto the coordinates of group $(i,l)$.
Note that $\mathcal{A}^*_i$ only considers $l' \leq r$, as the spurious columns in $\widetilde{\bV}$ will correspond to inactive groups.
Finally, note that the Gram matrix $\XAi^{\tp} \XAi$ satisfies the diagonal dominance conditions of Lemma~\ref{lem:block_inverse} with $\gamma^* = 1/\min_l (\hat{\bv}_l^{\tp}\hat{\bv}_l)$.
Then,
\begin{align*}
    \| \hbX_{i,l}^{\tp} \XA ( \XA^{\tp} \XA )^{-1} \hat{\bz}_{\mathcal{A}^*_U}\|_2 &= \| \hbX_{i,l}^{\tp} \XAi ( \XAi^{\tp} \XAi )^{-1} \hat{\bz}_{\mathcal{A}^*_i} \|_2 \\
    &\leq \sum_{l' \neq l, l' \leq r} \| \hbX_{i,l}^{\tp} \XAi ( \XAi^{\tp} \XAi )^{-1} \bP_{i,l'} \hat{\bz}_{\mathcal{A}^*_i} \|_2 \\
    &\leq \omega_{\max} \sum_{l' \neq l, l' \leq r} \| \hbX_{i,l}^{\tp} \XAi ( \XAi^{\tp} \XAi )^{-1} \bP_{i,l'} \|_2 \\
    &\leq \frac{\omega_{\max} \gamma^*r}{1 - \alpha} \max_{l' \neq l} \lVert \hbX_{i,l}^{\tp} \XAi \bP_{i,l'} \|_2 \\
    &= \frac{\omega_{\max} \gamma^*r}{1 - \alpha} \max_{l' \neq l} \lVert \hbX_{i,l}^{\tp} \hbX_{i,l'} \|_2 \\
    &\leq \omega_{\max} \frac{\alpha}{1 - \alpha} \leq \omega_{\min} \frac{\alpha}{1 - \alpha},
\end{align*}
where the third-to-last inequality uses Lemma~\ref{lem:block_inverse}, and the second-to-last uses Assumption~\ref{ass:local_U}.1.

\paragraph{Term (II)} By standard covering arguments for Euclidean unit balls \citep[][Corollary 4.2.13]{vershynin2018high},
$$
    \max_{i,l} \left\lVert \frac{\hbX_{i,l}^{\tp} \widehat{\bP}^{\perp}_{\mathcal{A}^*_U} \vecz(\bE)}{mn \lambda_U} \right\rVert_2 \leq \max_{i,l} \max_{j \in \mathcal{N}_i} \frac{2 \bu_j^{\tp} \hbX_{i,l}^{\tp} \widehat{\bP}^{\perp}_{\mathcal{A}^*_U} 
    \vecz(\bE)}{mn \lambda_U}
$$
for a ($1/2$)-covering $\mathcal{N}_i = \{\bu_j\}$ of size at most $5^{m_i} \leq 5^M$.
By assumption, $\vecop{\bE}$ has independent $\sigma$-subgaussian coordinates, so each variable for fixed $i, l, j$ is subgaussian with parameter at most
$$
    \frac{2\sigma}{mn \lambda_U} \sqrt{\bu_j^{\tp}\hbX_{i,l}^{\tp} \widehat{\bP}^{\perp}_{\mathcal{A}^*_U} \hbX_{i,l} \bu_j } \leq \frac{2 \sigma \norm{ \widetilde{\bV}}_{1,2}}{mn \lambda_U}.
$$
By a union bound, for any $t_{\alpha} > 0$,
$$
\mathbb{P}\left( \max_{i,l} \left\lVert \frac{\hbX_{i,l}^{\tp} 
\vecz(\bE)}{mn \lambda_U} \right\rVert_2 > t_{\alpha} \right) \leq 2 pd 5^M \exp\left\{ - \left( \frac{c_E mn \lambda_U t_{\alpha}}{2 \sigma \norm{ \widetilde{\bV}}_{1,2}} \right)^2 \right\}
$$
for a universal constant $c_E > 0$. Setting $t_{\alpha} = (1-2\alpha)/3(1-\alpha)$, a sufficient condition to make the (decreasing in $\lambda_U$) RHS $\leq \eta$ is
\begin{equation} \label{score_dom}
    \lambda_U \geq \frac{6 (1 - \alpha) \sigma \norm{ \widetilde{\bV}}_{1,2}}{c_E mn (1 - 2\alpha) \omega_{\min}} \left\{ \sqrt{M\log(5)} + \sqrt{\log (2pd/\eta)} \right\}
\end{equation}
Since \eqref{score_dom} implies
\begin{align*}
\lambda_U &\geq \frac{6 (1 - \alpha) \sigma \norm{ \widetilde{\bV}}_{1,2}}{c_E mn (1 - 2\alpha) \omega_{\min}} \left\{ M\log(5) + \log (2pd/\eta) \right\}^{1/2} \\
     \implies \quad \left( \frac{c_E mn \omega_{\min} t_{\alpha}}{2 \sigma \norm{ \widetilde{\bV}}_{1,2}} \right)^2 \lambda_U^2 &\geq M \log(5) + \log(2pd/\eta) \\
    \implies \quad \eta &\geq 2 pd 5^M \exp\left\{ - \left( \frac{c_E mn \lambda_U t_{\alpha}}{2 \sigma \norm{ \widetilde{\bV}}_{1,2}} \right)^2 \right\}.
\end{align*}
using the elementary inequality $\sqrt{x} + \sqrt{y} \geq \sqrt{x + y}$.

For analysis of Terms (III) and (IV), again note that by assumption, $\XAi - \XAistar$ and $\bbeta^*_{\mathcal{A}^*_i}$ must be submatrices of $\bX_U(\widetilde{\bV}\bP^{\tp}_{r,d}) - \bX_U(\bV_*)$ and $\vecz(\bU_*)$ respectively, since the additional spurious columns in $\widetilde{\bV}$ correspond to inactive groups.

\paragraph{Term (III)}
\begin{align*}
 \| \hbX_{i,l}^{\tp} (\XAstar - \XA) \bbeta^*_{\mathcal{A}^*_U} \|_2 &= \| \hbX_{i,l}^{\tp} (\XAi - \XAistar) \bbeta^*_{\mathcal{A}^*_i} \|_2 \\
 &\leq \sum_{l'=1}^r \| \hbX_{i,l}^{\tp} (\XAi - \XAistar) \bP_{i,l'} \bP_{i,l'} \bbeta^*_{\mathcal{A}^*_i} \|_2 \\
 &\leq \sum_{l'=1}^r \| \hbX_{i,l}^{\tp} (\hbX_{i,l'} - \bX^*_{i,l'}) \|_2 \| \bu_{\mathcal{G}_i,l'} \|_2 \\
&= \sum_{l'=1}^r | \hat{\bv}_l^{\tp}(\hat{\bv}_{l'} - \bv^*_{l'}) | ~ \|\bu_{\mathcal{G}_i,l'} \|_2 \\
 &\leq r C_{\beta_U} \lVert \widetilde{\bV} \rVert_{1,2} \epsilon_V 
\end{align*}
where $C_{\beta_U} = \max \|\bu_{\mathcal{G}_i,l'} \|_2$, and $\epsilon_V = \| \bV_* - \widetilde{\bV} \bP_{r,d}^{\tp} \|_{1,2}$. 

\paragraph{Term (IV)} Denote
$$
    \bh = \XA^{\tp} (\XAstar - \XA) \bbeta^*_{\mathcal{A}^*_U}
$$
Then similar to Term (I),
\begin{align*}
    &\| \hbX_{i,l}^{\tp} \XA ( \XA^{\tp} \XA )^{-1} \XA^{\tp} (\XAstar - \XA) \bbeta^*_{\mathcal{A}^*_U} \|_2 \\
    = &\| \hbX_{i,l}^{\tp} \XA ( \XA^{\tp} \XA )^{-1} \bh \|_2 \\
    \leq &\frac{\alpha}{1-\alpha} \max_{l' \neq l} \| \bP_{i,l'} \bh \|_2 \\
    = &\frac{\alpha}{1-\alpha} \max_{l' \neq l} \| \hbX_{i,l'}^{\tp} (\XAstar - \XA) \bbeta^*_{\mathcal{A}^*_U} \|_2 \\
    \leq& \frac{\alpha r C_{\beta_U} \lVert \widetilde{\bV} \rVert_{1,2} \epsilon_V}{1 - \alpha}
\end{align*}
where the last inequality uses the same derivation as Term (III). Thus we want
$$
\frac{r C_{\beta_U} \lVert \widetilde{\bV} \rVert_{1,2} \epsilon_V}{(1 - \alpha) nm \lambda_U} \leq \omega_{\min} \frac{1 - 2\alpha}{3(1-\alpha)},
$$
or equivalently
$$
    \lambda_U \geq \frac{3 r C_{\beta_U} \lVert \widetilde{\bV} \rVert_{1,2} \epsilon_V}{mn(1 - 2\alpha)\omega_{\min}}.
$$
On the event where the bound for Term (II) holds uniformly over all groups, combining all four terms gives $\max_{(i,l)\in \mathcal{A}_U^{*c}}\|\hat{\bz}_{i,l}\|_2 < \omega_{\min}$, provided that
\begin{align*}
    \lambda_U > \frac{3 \lVert \widetilde{\bV} \rVert_{1,2} }{mn(1-2\alpha)\omega_{\min}} \max\bigg\{ \frac{2 (1 - \alpha) \sigma}{c_E} \left\{ \sqrt{M\log(5)} + \sqrt{\log (2pd) + \log(1/\eta)} \right\}, r C_{\beta_U} \epsilon_V \bigg\}.
\end{align*}
Since this event holds with probability at least $1-\eta$, strict dual feasibility holds with probability at least $1-\eta$. Therefore $\widehat{\mathcal{A}}_U\subseteq \mathcal{A}_U^*$.

To prove exact recovery, start from \eqref{eq:optimality_beta}, and project onto the coordinates of some active group $(i,l) \in \mathcal{A}^*_U$.
Using the supporting lemma with $\gamma^{**} = 1 / \kappa \lVert \widetilde{\bV} \rVert^2_{1,2}$, we have
\begin{align*}
    \lVert \bP_{i,l} \left( \widehat{\bbeta}_{\mathcal{A}^*_U} - \bbeta^*_{\mathcal{A}^*_U}\right) \rVert_2 
    &= \lVert \bP_{i,l} ( \XA^{\tp} \XA )^{-1} ( \XA^{\tp} \widehat{\bm{\xi}} - (mn) \lambda_U \hat{\bz}_{\mathcal{A}^*_U} ) \rVert_2 \\
    &\leq \frac{\gamma^{**}}{1 - \alpha} \left\{ \lVert \hbX^{\tp}_{i,l}\widehat{\bm{\xi}} \rVert_2 + mn \lambda_U \lVert \hat{\bz}_{i,l} \rVert_2 \right\} \\
    &\leq \frac{\gamma^{**}}{1 - \alpha} \left\{ \lVert \hbX^{\tp}_{i,l}\vecz(\bE) \rVert_2 + \lVert \hbX^{\tp}_{i,l} (\XAstar - \XA) \bbeta^*_{\mathcal{A}^*_U} \rVert_2 + mn \omega_{\max} \lambda_U \right\} \\
    &\leq \frac{\gamma^{**}}{1 - \alpha} \left\{ t_{\alpha} mn \omega_{\min} \lambda_U + r C_{\beta_U} \epsilon_V \lVert \widetilde{\bV} \rVert_{1,2} + mn \omega_{\max} \lambda_U \right\} \\
    &\leq \frac{\gamma^{**}}{1 - \alpha} \left\{ t_{\alpha} mn \omega_{\min} \lambda_U + \frac{1-2\alpha}{3} mn \omega_{\min} \lambda_U + mn \omega_{\max} \lambda_U \right\}
\end{align*}
with probability at least $1-\eta$. Define
$$
    C_{\alpha} = t_{\alpha} + \frac{1 - 2\alpha}{3} + \delta
$$
for a small constant $\delta > 0$, and recall that $c_{\beta_U} = \min_{(i,l)\in\mathcal{A}^*_U} \lVert \bbeta^*_{i,l} \rVert_2$ denotes the minimum active signal.
All the active groups will be selected with probability at least $1 - \eta$ if
$$
    c_{\beta_U} \geq \frac{\gamma^{**} mn \lambda_U}{1 - \alpha} ( C_{\alpha} \omega_{\min} + \omega_{\max}) \quad \iff \quad \lambda_U \leq \frac{c_{\beta_U} (1 - \alpha) \kappa \lVert \widetilde{\bV} \rVert^2_{1,2} }{mn ( C_{\alpha} \omega_{\min} + \omega_{\max})} 
$$
Combining this with the strict dual feasibility argument for inactive groups gives
$\widehat{\mathcal A}_U=\mathcal A_U^*$, as desired.

\end{proof}

\subsection{Supporting lemma}

\begin{lemma} \label{lem:block_inverse}
    Suppose $\bM = [ \bB_{kl} ]$ is a symmetric block matrix which satisfies the diagonal dominance conditions
    $$
        \lVert \bB_{kk}^{-1} \rVert_2 \leq \gamma, \quad \sum_{l \neq k} \lVert \bB_{kk}^{-1} \bB_{kl} \rVert_2 \leq \alpha < 1
    $$
    for all $k$.
    Then for any vector $\bv$, and projection operator $\bP_k$ onto the group $k$ coordinates,
    $$
        \max_k \lVert \bP_k \bM^{-1} \bv \rVert_2 \leq \frac{\gamma}{1-\alpha} \max_k \lVert \bv_k \rVert_2.
    $$
\end{lemma}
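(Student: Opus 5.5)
The plan is a block Neumann-series (block Jacobi) argument that controls $\bM^{-1}\bv$ group by group in a block-wise $\ell_\infty$ norm. Write $\bD = \operatorname{diag}(\bB_{kk})$ for the block-diagonal part of $\bM$ and $\bM = \bD + \bM_{\mathrm{off}}$. Then
\[
\bM = \bD(\bI + \bD^{-1}\bM_{\mathrm{off}}) = \bD(\bI + \bA),
\]
where $\bA = \bD^{-1}\bM_{\mathrm{off}}$ has blocks $\bA_{kl} = \bB_{kk}^{-1}\bB_{kl}$ for $l \neq k$ and zero diagonal blocks. Each $\bB_{kk}$ is invertible, which is implicit in the hypothesis $\|\bB_{kk}^{-1}\|_2 \le \gamma$.

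The key device is the mixed norm $\|\bx\|_{\infty,2} := \max_k \|\bP_k \bx\|_2$ and the operator norm it induces. For any $\bx$, the triangle inequality on each block row gives
\[
\|\bP_k \bA\bx\|_2 \le \sum_{l\neq k}\|\bB_{kk}^{-1}\bB_{kl}\|_2\,\|\bx_l\|_2 \le \alpha \|\bx\|_{\infty,2}.
\]
Hence $\|\bA\|_{\infty,2\to\infty,2} \le \alpha < 1$. It follows that $\bI+\bA$ is invertible, and so is $\bM$. The Neumann series $(\bI+\bA)^{-1} = \sum_{t\ge0}(-\bA)^t$ converges and satisfies $\|(\bI+\bA)^{-1}\|_{\infty,2\to\infty,2} \le 1/(1-\alpha)$.

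To finish, I would write $\bM^{-1}\bv = (\bI+\bA)^{-1}\bD^{-1}\bv$. The blockwise bound gives $\|\bD^{-1}\bv\|_{\infty,2} = \max_k \|\bB_{kk}^{-1}\bv_k\|_2 \le \gamma \max_k\|\bv_k\|_2$. Combining the two bounds yields
\[
\max_k\|\bP_k\bM^{-1}\bv\|_2 \le \frac{\gamma}{1-\alpha}\max_k\|\bv_k\|_2.
\]
Symmetry of $\bM$ is not actually needed.

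The main obstacle is making sure the row-sum condition is the correct orientation, namely that the products are $\bB_{kk}^{-1}\bB_{kl}$ with the inverse on the left, matching $\bD^{-1}$ on the left. That orientation is exactly what the factorization above uses, so no transposition issue arises. If one prefers to avoid series convergence, an alternative is a direct fixed-point argument. Set $\bx = \bM^{-1}\bv$ and pick the block $k$ attaining $\|\bx\|_{\infty,2}$. From $\bB_{kk}\bx_k = \bv_k - \sum_{l\neq k}\bB_{kl}\bx_l$ one gets $\|\bx\|_{\infty,2} \le \gamma\|\bv\|_{\infty,2} + \alpha\|\bx\|_{\infty,2}$, and rearranging gives the same bound. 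This second route still requires invertibility of $\bM$, which comes from the same contraction estimate: $\bM\bx = \bzero$ forces $\|\bx\|_{\infty,2}\le\alpha\|\bx\|_{\infty,2}$, so $\bx=\bzero$.
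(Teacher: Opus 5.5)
Your proof is correct. Your fallback fixed-point argument is exactly the paper's proof: set $\bu=\bM^{-1}\bv$, solve the $k$th block row for $\bu_k$, bound by $\gamma\max_k\|\bv_k\|_2+\alpha\max_k\|\bu_k\|_2$, and rearrange. Your Neumann-series route is the same contraction estimate $\|\bD^{-1}\bM_{\mathrm{off}}\|_{\infty,2\to\infty,2}\le\alpha$ phrased as an operator-norm bound. Your write-up adds two small points: it derives invertibility of $\bM$ from the hypotheses, which the paper takes for granted, and it notes that symmetry is unnecessary.
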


\begin{proof}
    Let $\bu = \bM^{-1}\bv$. Expanding block-wise, 
    \begin{align*}
        &\bP_k \bM \bu = \bv_k \\
        \implies \quad &\bB_{kk} \bu_k = \bv_k - \sum_{l \neq k} \bB_{kl} \bu_l \\
        \implies \quad &\bu_k = \bB_{kk}^{-1} \bv_k - \sum_{l \neq k} \bB_{kk}^{-1}\bB_{kl}\bu_l \\
        \implies \quad &\lVert \bu_k \rVert_2 \leq \gamma \max_k \lVert \bv_k \rVert_2 + \alpha \max_k \lVert \bu_k \rVert_2 \\
        \implies \quad &\max_k \lVert \bu_k \rVert_2 \leq \frac{\gamma}{1 - \alpha} \max_k \lVert \bv_k \rVert_2.
    \end{align*}
\end{proof}

\end{document}